\newtheorem{definition}{\textbf{Definition}}{\normalfont}{\normalfont}
{\normalfont}{\normalfont}
\newtheorem{remark}{Remark}{\normalfont}{\normalfont}
\newtheorem{theorem}{Theorem}
\newtheorem{assumption}{Assumption}
\newtheorem{lemma}{Lemma}
\newenvironment{proof}{\emph{Proof:}}{\hfill$\square$}
\renewcommand{\figurename}{Figure}
\renewcommand{\tablename}{Table}
\newcommand{\smallmat}[1]{\left[ \begin{smallmatrix}#1 \end{smallmatrix} \right]}
\begin{document}

	\begin{frontmatter}
		\runtitle{On the design of regularized explicit predictive controllers from input-output data
		}
		\title{On the design of regularized explicit predictive controllers from input-output data\thanksref{funding}
		}
		
		\thanks[funding]{This project was partially supported by the Italian Ministry of University and Research under the PRIN'17 project \textquotedblleft Data-driven learning of constrained control systems", contract no. 2017J89ARP.}
		
		\author[Polimi]{Valentina Breschi}\ead{valentina.breschi@polimi.it},
		\author[Polimi]{Andrea Sassella}\ead{andrea.sassella@polimi.it},
		\author[Polimi]{Simone Formentin}\ead{simone.formentin@polimi.it}
		
		\address[Polimi]{Dipartimento di Elettronica, Informazione e Bioingegneria, Politecnico di Milano, Piazza L. da Vinci 32, 20133 Milano, Italy.}
		
		\begin{keyword}    
			Data-driven control; learning-based control, predictive control, explicit MPC             
		\end{keyword}

		\begin{abstract}
		On the wave of recent advances in data-driven predictive control, we present an explicit predictive controller that can be constructed from a batch of input/output data only. The proposed explicit law is build upon a regularized implicit data-driven predictive control problem, so as to guarantee the %existence and 
		uniqueness of the explicit predictive controller. As a side benefit, the use of regularization is shown to improve the capability of the explicit law in coping with noise on the data. The effectiveness of the retrieved explicit law and the repercussions of regularization on noise handling are analyzed on two benchmark simulation case studies, showing the potential of the proposed regularized explicit controller.	
		\end{abstract}
		
	\end{frontmatter}
	
	\section{Introduction}
	One of the main challenges of modern control theory is finding the most effective (and efficient) approaches to benefit from data when designing a controller for an unknown system. Traditionally, learning-based control strategies rely on a two-step procedure, which entails the identification of a model for the unknown system and the design of a model-based controller. %Even though t
	These techniques can profit from established tools for system identification \cite{Ljung1999}, but the resulting models are often not optimized for control, as their goal is to approximate the system dynamics by minimizing some fitting error. Although control-oriented identification approaches have been proposed (see \emph{e.g.,} \cite{FORMENTIN2021,GEVERS2005}), they still do not allow one to avoid the two-stage procedure, with the modeling phase frequently making use of the lion's share of time and resources.\\ 
	With a change in the data-handling paradigm, several techniques have been proposed to design controllers \emph{directly} from data, while bypassing an explicit identification phase. Consolidated techniques for data-driven control, such as Virtual Reference Tuning (VRFT) method \cite{CAMPI2002,formentin2019deterministic,formentin2012non}, Iterative Feedback Tuning (IFT) \cite{Hjalmarsson2002} and Correlation-based Tuning (CbT) \cite{Karimi2004,van2011data}, directly employ data to tune the controller, but they have two major drawbacks. Firstly, they rely on the definition of a reference model embedding the desired closed-loop behavior. In this context, reference model selection thus becomes a rather delicate and time consuming task, with the reference model being the main tuning knob of these approaches \cite{selvi2018towards,van2021direct,breschi2021proper}. Secondly, state-of-the-art direct control techniques are not naturally equipped to cope with saturation and constraints, thus requiring additional layers in the control structure (see, \emph{e.g.,} \cite{Piga2018,Masti2020,Breschi2020,breschi20b,Sassella2021}) to handle them.\\
	More recently, the regained popularity of results from behavioral theory \cite{Willems2005} have lead to the introduction of \textit{alternative} data-based control schemes that rely on a trajectory-based description of the system dynamics. These range from passivity-based \cite{Montenbruck2016,Sharf2020,Martin2021,Tanemura2019,Sharf2021,BERBERICH2020d} and model reference \cite{Breschi2021} controllers, to optimal \cite{DePersis2021,DePersis2019,dorfler2021certaintyequivalence} and predictive \cite{Coulson2019,Sassella2021,Berberich2020b,berberich2021design,bongard2021robust,Coulson2021,Dorfler2021} ones, with the latter built to tackle constraints.
		
	\textbf{Contribution and related works.} In the spirit of \cite{Sassella2021}, we derive the explicit solution for the data-based  predictive control problem introduced in \cite{Berberich2021}. Like in traditional Model Predictive Control (MPC) \cite{Alessio2009}, transitioning from a data-based implicit scheme to a data-driven explicit law entails that the optimal control action can be computed via simple function evaluations, rather than requiring the solution of an optimization problem in real-time. This can be computational advantageous, particularly when the problem at hand is relatively simple and the sampling time rather small. Differently from \cite{Berberich2021}, our shift to the explicit predictive law allows one to obtain the optimal input by constructing  a set of data-based matrices only, with the trajectory-based model of the system being ultimately \emph{transparent} to the final user. \\
	Since the structure of the problem proposed in \cite{Berberich2021} prevents the computation of the explicit solution as it is, we propose to augment the performance-oriented predictive control cost with a \textit{regularization} term, acting on the trajectory-based model of the plant. As such, we denominate the presented controller \emph{Regularized Explicit Data-Driven Predictive Controller} (R-EDDPC). Apart from allowing the explicit solution to be found, we show that the regularization can help coping with noisy data, in line with what is proposed in \cite{Berberich2021} to robustify the implicit scheme. We show that the obtained explicit predictive law is piecewise affine, retrieving a controller that resembles the ones introduced in \cite{Sassella2021,Alessio2009}. Differently from \cite{Sassella2021}, the explicit law obtained in this work depends on \textit{input/output data only}, thus not requiring the state of the controlled system to be fully measurable. Nonetheless, we show that the two explicit laws are equivalent under some design assumptions.
	 
	\textbf{Outline.} The paper is organized as follows. After recalling some preliminaries, the problem of designing explicit predictive controllers from data is stated in Section~\ref{Sec:formulation}. In Section~\ref{Sec:fromIMPtoEXP} we shift from the implicit predictive control problem in \cite{Berberich2021} to its explicit counterpart. The proposed solution is then compared with that of \cite{Sassella2021} in Section~\ref{Sec:comparison}. In Section~\ref{Sec:extensions}, we show how the explicit predictive controller can be extended to handle set points changes and to be further robustified against noise. The performance of the proposed method is then discussed in Section~\ref{Sec:examples} by means of two benchmark case studies. The paper is ended by some concluding remarks.
	
	\textbf{Notation.} Let $\mathbb{N}$ and $\mathbb{R}$  be the set of natural and real number respectively. Let $\mathcal{I}_{N}=\{0,1,\ldots,N-1\}$. Denote with $\mathbb{R}^{n}$ the set of real column vectors of dimension $n$ and with $\mathbb{R}^{n \times m}$ the set of real matrices with $n$ rows and $m$ columns. Given $A \in \mathbb{R}^{n \times m}$, we indicate with $A' \in \mathbb{R}^{m \times n}$ its transpose, while we denote by $[A]_{i}$ the $i$-th row of $A$ and by $[A]_{i:j}$ the subset of rows of $A$, starting from the $i$-th up to the $j$-th row (for $i <j$). When $n=m$, we indicate the inverse of $A$ as $A^{-1}$, while $A^{\dagger}$ denotes its right inverse when $n \neq m$. We denote with $I_{n}$ the identity matrix of dimension $n$, while we do not specify the dimension of zero vectors or matrices. Given $x \in \mathbb{R}^{n}$, we denote the squared 2-norm of this vector as $\|x\|^{2}$, while $\|x\|_{Q}^{2}=x'Qx$. Given $Q \in \mathbb{R}^{n \times n}$, $Q \succeq 0$ and $Q \succ 0$ indicate that the matrix is positive semi-definite and positive definite, respectively. Let $Q_{i} \in \mathbb{R}^{n \times n}$, for $i=1,\ldots,L$. Then, $diag(Q_{1},\ldots,Q_{L})$ denotes the block diagonal matrix composed by $\{Q_{i}\}_{i=1}^{L}$. Given a sequence $\{u_{k}\}_{k=0}^{N-1}$, we denote the associated Hankel matrix as $H_{L}(u)$, \emph{i.e.,}
	\begin{equation}\label{eq:Hankel}
		H_{L}(u)=\begin{bmatrix}
			u_{0} & u_{1} & \cdots & u_{N-L}\\
			u_{1} & u_{2} & \cdots & u_{N-L+1}\\
			\vdots & \vdots & \ddots & \vdots\\
			u_{L-1} & u_{L} & \cdots & u_{N-1}
		\end{bmatrix},
	\end{equation}
	while a window of the sequence is indicated as 
	\begin{equation}\label{eq:window}
		u_{[a,b]}=\begin{bmatrix}
			u_{a}\\
			u_{a+1}\\
			\vdots\\
			u_{b}
		\end{bmatrix},
	\end{equation} 
	with $a<b$.
	
	%%%%%%%%%%%%%%%%%%%%%%%%%%%%%%%%%%%%%%%%%%%%%%%%%%%%%%%%%%%%%
	\section{Problem formulation}\label{Sec:formulation}
	The data-driven predictive control formulation proposed in \cite{Berberich2021} represents the starting point from which we derive R-EDDPC. Therefore, we here recall the results on which the former lays its foundation, starting from a formal %introduction of the
	 definition of persistently exciting sequence.  
	\begin{definition}[Persistence of excitation]\label{def:1}
		Given a signal $\nu_k \in \mathbb{R}^{\eta}$, the sequence~$\{\nu_{k}\}_{k=0}^{N-1}$ is said to be  persistently exciting of order $L$ if $\mathrm{rank}(H_{L}(\nu))=\eta L$. 
	\end{definition}
	Consider now a \emph{linear time-invariant} (LTI) system $\mathcal{P}$, with state $x_k \in \mathbb{R}^{n}$, input $u_{k} \in \mathbb{R}^{m}$ and output $y_k \in \mathbb{R}^{p}$. A trajectory of this system can be formally defined as follows.
	\begin{definition}[System trajectory]
		An input/output sequence $\{u_k,y_k\}_{k=0}^{N-1}$ is a trajectory of the LTI system $\mathcal{P}$ if there exists an initial condition $\bar{x} \in \mathbb{R}^{n}$ and a state sequence $\{x_{k}\}_{k=0}^{N}$ such that
		\begin{align*}
			& x_{k+1}=Ax_{k}+Bu_{k},~~x_{0}=\bar{x},\\
			& y_{k}=Cx_{k}+Du_{k},
		\end{align*}
		for all $k \in \mathcal{I}_{N}$, where $(A,B,C,D)$ is a minimal realization of $\mathcal{P}$.
	\end{definition}	
	Given a noiseless trajectory $\{\tilde{u}_{k},\tilde{y}_{k}\}_{k=0}^{N-1}$ of $\mathcal{P}$, with $\{\tilde{u}_{k}\}_{k=0}^{N-1}$ being persistently exciting of order $L+n$, the following result further holds.
	\begin{theorem}[Trajectory-based representation \cite{Berberich2021}]\label{thm:1}
		Let $\{\tilde{u}_{k},\tilde{y_{k}}\}_{k=0}^{N-1}$ be a trajectory of an LTI system $\mathcal{P}$. Assume that $\{\tilde{u}_{k}\}_{k=0}^{N-1}$ is persistently exciting of order $L+n$. Then $\{\bar{u}_{k},\bar{y}_{k}\}_{k=0}^{L-1}$ is a trajectory of $\mathcal{P}$ if and only if there exists $\alpha \in \mathbb{R}^{N-L+1}$ such that
		\begin{equation}
			\begin{bmatrix}
				H_{L}(\tilde{u})\\
				H_{L}(\tilde{y})
			\end{bmatrix}\alpha=\begin{bmatrix}
				\bar{u}_{[0,L-1]}\\
				\bar{y}_{[0,L-1]}
			\end{bmatrix},
		\end{equation}
		where $\tilde{y}=\{\tilde{y}_{k}\}_{k=0}^{N-1}$. 
	\end{theorem}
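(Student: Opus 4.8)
The plan is to prove the two implications separately: sufficiency ($\Leftarrow$) will follow from linearity, while necessity ($\Rightarrow$) will rest on a rank argument driven by the persistence of excitation hypothesis.

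For sufficiency I would exploit the fact that, by the superposition principle, the set of length-$L$ trajectories of the LTI system $\mathcal{P}$ forms a linear subspace of $\mathbb{R}^{(m+p)L}$. Each column of $\smallmat{H_{L}(\tilde{u})\\ H_{L}(\tilde{y})}$ is by construction a length-$L$ window of the recorded trajectory $\{\tilde{u}_{k},\tilde{y}_{k}\}$, hence itself a trajectory of $\mathcal{P}$. Therefore any vector $\smallmat{H_{L}(\tilde{u})\\ H_{L}(\tilde{y})}\alpha$, being a linear combination of trajectories, is again a trajectory; equating it to $\smallmat{\bar{u}_{[0,L-1]}\\ \bar{y}_{[0,L-1]}}$ shows at once that $\{\bar{u}_{k},\bar{y}_{k}\}_{k=0}^{L-1}$ is a trajectory of $\mathcal{P}$.

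For necessity the goal is to show that the column space of the Hankel matrix coincides with the \emph{entire} space of length-$L$ trajectories, so that every such trajectory admits a representation through some $\alpha$. First I would collect the states at the start of each window in $X_{0}=[\,x_{0}\ x_{1}\ \cdots\ x_{N-L}\,]$ and write the output Hankel via the state-space data as $H_{L}(\tilde{y})=\Gamma_{L}X_{0}+T_{L}H_{L}(\tilde{u})$, with $\Gamma_{L}$ the extended observability matrix and $T_{L}$ the block-Toeplitz matrix of Markov parameters. This yields the factorization
\begin{equation*}
\begin{bmatrix} H_{L}(\tilde{u})\\ H_{L}(\tilde{y})\end{bmatrix}
=\begin{bmatrix} I & 0\\ T_{L} & \Gamma_{L}\end{bmatrix}
\begin{bmatrix} H_{L}(\tilde{u})\\ X_{0}\end{bmatrix}.
\end{equation*}
Since $(A,B,C,D)$ is a minimal realization, $\mathcal{P}$ is observable and, for $L$ no smaller than the observability index, $\Gamma_{L}$ has full column rank $n$; hence the leftmost factor has full column rank $mL+n$.

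The crux is to establish that $\smallmat{H_{L}(\tilde{u})\\ X_{0}}$ has full row rank $mL+n$, and this is exactly where persistence of excitation of order $L+n$ enters: controllability of $(A,B)$ (guaranteed by minimality) together with the PE condition forces this stacked input/state matrix to have full row rank. Combining the two rank facts, the product has rank $mL+n$. I would then close by a dimension count: the column space of the Hankel matrix is contained in the space of length-$L$ trajectories, whose dimension is exactly $mL+n$ (the $mL$ freely chosen input samples plus the $n$-dimensional initial state); since the column space already has dimension $mL+n$, the two subspaces coincide, and any trajectory $\smallmat{\bar{u}_{[0,L-1]}\\ \bar{y}_{[0,L-1]}}$ therefore lies in the column space, furnishing the required $\alpha$. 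I expect the main obstacle to be precisely the rank claim for $\smallmat{H_{L}(\tilde{u})\\ X_{0}}$: translating a persistence-of-excitation property stated on the input Hankel alone into a joint input/state rank property requires controllability and a careful argument (by contradiction, a nontrivial left annihilator of the stacked matrix would have to violate either the PE condition or controllability). The remaining steps---superposition, the Toeplitz/observability factorization, and the dimension count---are essentially bookkeeping once this lemma is in place.
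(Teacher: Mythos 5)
First, note that the paper itself offers no proof of Theorem~\ref{thm:1}: it is imported verbatim from \cite{Berberich2021} (and ultimately from Willems' fundamental lemma), so your attempt can only be measured against the standard literature proof --- which is, in fact, exactly the skeleton you reconstruct. Your sufficiency direction is complete and correct: the columns of $\smallmat{H_{L}(\tilde{u})\\ H_{L}(\tilde{y})}$ are length-$L$ windows of a genuine trajectory, and linearity of $\mathcal{P}$ closes the argument. Your necessity direction also has the right architecture (the factorization through $\smallmat{I & 0\\ T_{L} & \Gamma_{L}}$ and the input/state matrix $\smallmat{H_{L}(\tilde{u})\\ X_{0}}$ is precisely how the result is proved in the literature), but it contains one genuine gap and one avoidable blemish.

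The gap: the full row rank of $\smallmat{H_{L}(\tilde{u})\\ X_{0}}$ under persistence of excitation of order $L+n$ and controllability \emph{is} the fundamental lemma --- essentially all of the theorem's content lives there --- and you only assert it, gesturing at a contradiction argument. The actual proof is not a routine annihilator chase: one must use the extra $n$ orders of excitation in $H_{L+n}(\tilde{u})$ together with controllability of $(A,B)$ to show that the rows of $X_{0}$ cannot be linearly dependent on the rows of $H_{L}(\tilde{u})$, e.g.\ by expressing the state as a reachable combination of $n$ preceding inputs and propagating a putative left annihilator into a rank deficiency of the deeper input Hankel matrix. Flagging the obstacle is not discharging it, so as written the proposal does not prove the nontrivial implication. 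The blemish: your closing dimension count requires $\Gamma_{L}$ to have full column rank $n$, i.e.\ $L$ at least the observability index --- a hypothesis absent from the theorem statement (and the claim that the length-$L$ trajectory space has dimension $mL+n$ fails for smaller $L$; in general it is $mL+\mathrm{rank}(\Gamma_{L})$). This detour is unnecessary: once $\smallmat{H_{L}(\tilde{u})\\ X_{0}}$ has full row rank $mL+n$, the column space of the product equals the column space of $\smallmat{I & 0\\ T_{L} & \Gamma_{L}}$, which is by construction the image of the map $(\bar{u},x_{0})\mapsto(\bar{u},\bar{y})$, i.e.\ the entire space of length-$L$ trajectories --- no observability-index restriction and no dimension bookkeeping needed. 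With the key rank lemma actually proved and the conclusion rephrased this way, your argument becomes the standard proof.
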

	According to this result, a single input/output sequence of $\mathcal{P}$ can be used to retrieve a representation of the system spanning the vector space of all its trajectories, provided that such a sequence is properly generated. 

We can now mathematically formulate the problem tackled in this work. Let $\mathcal{P}$ be an \emph{unknown} LTI system of order $n \in \mathbb{N}$, with $m \in \mathbb{N}$ inputs and $p \in \mathbb{N}$ outputs, which is here supposed to be \emph{controllable} and \emph{observable}\footnote{This assumption is shared with \cite{Berberich2021}.}. Assume that we can carry out experiments on $\mathcal{P}$, so as to collect a sequence of $N \in \mathbb{N}$ input/output pairs, $\mathcal{D}_{N}=\{u_{k}^{d},y_{k}^{d}\}_{k=0}^{N-1}$, and assume that the input data satisfy the following condition.
	\begin{assumption}[Quality of data]\label{assump:1}
		The input sequence $u^{d}=\{u_{k}^{d}\}_{k=0}^{N-1}$ is persistently exciting of order $L+2n$, according to Definition~\ref{def:1}.
	\end{assumption} 
	Our goal is to find an \emph{explicit} data-based solution for the implicit data-driven predictive control (DD-PC) problem introduced in \cite{Berberich2021}, that is defined as follows:
	\begin{subequations}\label{eq:nominalMPC}
		\begin{align}
			&\min_{\alpha,\bar{u},\bar{y}}~~ \sum_{k=0}^{L-1} \ell(\bar{u}_{k},\bar{y}_{k}) \label{eq:cost_fcn}\\
			&~~~\mbox{s.t. } \begin{bmatrix}
				\bar{u}_{[-n,L-1]}\\
				\bar{y}_{[-n,L-1]}
			\end{bmatrix}=\begin{bmatrix}
				H_{L+n}(u^{d})\\
				H_{L+n}(y^{d})
			\end{bmatrix}\alpha,\label{eq:model}\\
			&~~~\quad~~ \begin{bmatrix}
				\bar{u}_{[-n,-1]}\\
				\bar{y}_{[-n,-1]}	
			\end{bmatrix}=\chi_{0}, \label{eq:initial_constraint}\\
			&~~~\quad ~~ \begin{bmatrix}
				\bar{u}_{[L-n,L-1]}\\
				\bar{y}_{[L-n,L-1]}	
			\end{bmatrix}=\chi_{L}, \label{eq:terminal_constraint}\\
			&~~~\quad~~~ \bar{u}_{k} \in \mathbb{U}, ~~ \bar{y}_{k} \in \mathbb{Y}~~\forall k \in \mathcal{I}_{L}. \label{eq:feasibility_condition}
		\end{align}
	\end{subequations}
	Therefore, we aim at \emph{explicitly} finding the optimal input sequence $\bar{u}$, the corresponding outputs $\bar{y}$ and the trajectory-based model $\alpha \in \mathbb{R}^{n_{\alpha}}$, with $n_{\alpha}=N\!-\!L\!-\!n\!+\!1$, so as to minimize the quadratic cost
	\begin{equation*}
		\ell(\bar{u}_{k},\bar{y}_{k})=||\bar{u}_{k}-u^{s}||_{R}^{2}+||\bar{y}_{k}-y^{s}||_{Q}^{2},
	\end{equation*}
	over a prediction horizon $L$, with $Q \succeq 0$ and $R \succ 0$, with $u^{s} \in \mathbb{R}^{m}$ and $y^{s} \in \mathbb{R}^{p}$ verifying the subsequent definition.
	\begin{definition}[Equilibrium \cite{Berberich2021}]
			An input/output pair $(u^{s},y^{s})$ is an equilibrium of the LTI system $\mathcal{P}$ if the sequence $\{\bar{u}_{k},\bar{y}_{k}\}_{k=0}^{n}$, with $\bar{u}_{k}=u^{s}$ and $\bar{y}_{k}=y^{s}$ for all $k \in \mathcal{I}_{n}$ is a trajectory of $\mathcal{G}$. 
	\end{definition}
	Meanwhile, our search for the optimal sequences and model is constrained by $(i)$ the initial condition in \eqref{eq:initial_constraint}, $(ii)$ the terminal constraint in \eqref{eq:terminal_constraint}, and $(iii)$ the value constraints in \eqref{eq:feasibility_condition}. The initial condition is characterized by $\chi_{0} \in \mathbb{R}^{n(m+p)}$, which changes at each time instant $t \in \mathbb{N}$. Specifically, this vector collects the $n$ past input/output pairs resulting from the application of the DD-PC law, \emph{i.e.,} at time $t$ it is defined as
	\begin{equation*}
			\chi_{0}=\begin{bmatrix}
				u_{[t-n,t-1]}\\
				y_{[t-n.t-1]}
			\end{bmatrix}.
		\end{equation*} 
	The terminal constraint is instead shaped by a constant vector $\chi_{L} \in \mathbb{R}^{n(m+p)}$, given by 
		\begin{equation*}
			\chi_{L}=\begin{bmatrix}
				u_{n}^{s}\\
				y_{n}^{s}
			\end{bmatrix},
		\end{equation*}
		where $u_{n}^{s}$ and $y_{n}^{s}$ stack $n$ copies of $u^{s}$ and $y^{s}$, respectively. Lastly, the sets characterizing the value constraints in \eqref{eq:feasibility_condition}, namely $\mathbb{U} \subseteq \mathbb{R}^{m}$ and $\mathbb{Y} \subseteq \mathbb{R}^{p}$, are assumed to be \emph{polytopic}. Note that, the terminal constraint further influences the choice of the prediction horizon $L$, since $L \geq n$ for the problem to be well-posed.
		
		\begin{remark}[Stability and recursive feasibility]\label{remark1}
			As proven in \cite{Berberich2021}, within a noiseless setting the data-based formulation in \eqref{eq:nominalMPC} guarantees recursive feasibility and closed-loop exponential stability of the equilibrium $(u^{s},y^{s})$.
		\end{remark}

	\section{From implicit DD-PC to R-EDDPC}\label{Sec:fromIMPtoEXP}
	The stepping stone for the derivation of the explicit data-driven solution to~\eqref{eq:nominalMPC} lays in its reformulation as an optimization problem where the unique decision variable is $\alpha$. To this end, let us define the following matrices
	\begin{subequations}\label{eq:aid_matrices}
		\begin{align}
			& \mathcal{H}_{\gamma}^{P}\!=\![H_{L+n}(\gamma^{d})]_{1:n}, ~~ \mathcal{H}_{\gamma}^{F}\!=\![H_{L+n}(\gamma^{d})]_{n+1:L+n}, \\
			& \mathcal{H}_{\gamma}^{T}\!=\![H_{L+n}(\gamma^{d})]_{L+1:L+n},~~ \mathcal{H}_{\gamma}^{k}\!=\![H_{L+n}(\gamma^{d})]_{k+1}, 
		\end{align}
	\end{subequations}
	where $\gamma$ is a generic placeholder, to be replaced with either $u$ or $y$. We stress that $\mathcal{H}_{u}^{P}$, $\mathcal{H}_{u}^{F}$ and $\mathcal{H}_{u}^{T}$ are all full row rank matrices, since $u^{d}$ is persistently exciting of order $L+2n$. Accordingly, problem \eqref{eq:nominalMPC} can be manipulated and equivalently recast as:
	\begin{subequations}\label{eq:alpha_MPC}
		\begin{align}
			& \min_{\alpha}~~||\alpha||_{W_{d}}^{2}+2c_{d}'\alpha\\ 
			&~~\mbox{s.t. } \begin{bmatrix}
				\mathcal{H}_{u}^{P}\\
				\mathcal{H}_{y}^{P}
			\end{bmatrix}\alpha=\chi_{0},\\
			&~~\quad~~\begin{bmatrix}
				\mathcal{H}_{u}^{T}\\
				\mathcal{H}_{y}^{T}
			\end{bmatrix}\alpha =\chi_{L},\\
			&~~\quad~~\mathcal{H}_{u}^{k}\alpha \in \mathbb{U},~~\mathcal{H}_{y}^{k}\alpha \in \mathbb{Y},~~\forall k \in \mathcal{I}_{L},
		\end{align}
		where 
		\begin{align}
			& W_{d}=(\mathcal{H}_{u}^{F})'\mathcal{R}\mathcal{H}_{u}^{F}+(\mathcal{H}_{y}^{F})'\mathcal{Q}\mathcal{H}_{y}^{F},\label{eq:W}\\
			& c_{d}=-\left[(\mathcal{H}_{u}^{F})'\mathcal{R}u_{L}^{s}+(\mathcal{H}_{y}^{F})'\mathcal{Q}y_{L}^{s}\right],
		\end{align}
		with $\mathcal{Q}\!=\!\mathrm{diag}(Q,\ldots,Q) \succeq 0$, $\mathcal{R}\!=\!\mathrm{diag}(R,\ldots,R) \succ 0$ and $u_{L}^{s}$ and $y_{L}^{s}$ stacking $L$ copies of $u^{s}$ and $y^{s}$, respectively. 
	\end{subequations}
	However, the weighting matrix $W_{d}$ in \eqref{eq:W} can be shown to be positive \emph{semi-definite}, as illustrated in the proof of the following lemma.
	\begin{lemma}[Positive semi-definiteness of $W_{d}$]
		The matrix $W_{d}\!\in\!\mathbb{R}^{n_{\alpha} \times n_{\alpha}}$ in \eqref{eq:W} is positive semi-definite.
	\end{lemma}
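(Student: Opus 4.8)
The plan is to establish positive semi-definiteness straight from the definition, by proving that the quadratic form $v'W_{d}v$ is nonnegative for every $v\in\mathbb{R}^{n_{\alpha}}$. Substituting the expression \eqref{eq:W} and grouping the two contributions yields
\begin{equation*}
	v'W_{d}v=\left(\mathcal{H}_{u}^{F}v\right)'\mathcal{R}\left(\mathcal{H}_{u}^{F}v\right)+\left(\mathcal{H}_{y}^{F}v\right)'\mathcal{Q}\left(\mathcal{H}_{y}^{F}v\right),
\end{equation*}
so that $W_{d}$ is displayed as a sum of two terms of the form $P'SP$ with $S\succeq0$, namely $(P,S)=(\mathcal{H}_{u}^{F},\mathcal{R})$ and $(P,S)=(\mathcal{H}_{y}^{F},\mathcal{Q})$; a matrix of this form is automatically positive semi-definite irrespective of $P$.

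Setting $a=\mathcal{H}_{u}^{F}v$ and $b=\mathcal{H}_{y}^{F}v$, the form reads $v'W_{d}v=\|a\|_{\mathcal{R}}^{2}+\|b\|_{\mathcal{Q}}^{2}$. Since $\mathcal{R}=\mathrm{diag}(R,\ldots,R)\succ0$ and $\mathcal{Q}=\mathrm{diag}(Q,\ldots,Q)\succeq0$ inherit (semi-)definiteness from $R\succ0$ and $Q\succeq0$, both summands are nonnegative, whence $v'W_{d}v\geq0$ for every $v$, that is $W_{d}\succeq0$. This already closes the proof of the statement.

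I do not expect this nonnegativity estimate to be the real difficulty: it is immediate from the $P'SP$ structure. The more delicate, and for the paper more telling, point is to make explicit why $W_{d}$ is only semi-definite and never positive definite, since this is exactly what obstructs the direct computation of the explicit law and motivates the regularization introduced afterwards. Here I would exhibit a nonzero element of $\ker W_{d}$: from the identity above, $v'W_{d}v=0$ whenever $a=b=0$, i.e.\ whenever $v\in\ker\mathcal{H}_{u}^{F}\cap\ker\mathcal{H}_{y}^{F}$, and since both $\mathcal{H}_{u}^{F}$ and $\mathcal{H}_{y}^{F}$ are row-subblocks of the stacked Hankel matrix $[\,H_{L+n}(u^{d})'\ \ H_{L+n}(y^{d})'\,]'$, it suffices to show that the latter has a nontrivial null space. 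This follows from the rank count underlying Theorem~\ref{thm:1}: under Assumption~\ref{assump:1} the stacked Hankel has $n_{\alpha}=N\!-\!L\!-\!n\!+\!1$ columns but rank only $m(L+n)+n$, so its kernel, which is contained in $\ker W_{d}$, has positive dimension. Verifying this dimension gap, rather than the semi-definiteness itself, is the step that demands the most care, as it requires combining the persistency-of-excitation lower bound on $N$ with the behavioral rank formula.
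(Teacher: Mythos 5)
Your proof is correct, and for the statement as literally posed it coincides in substance with the paper's: both arguments amount to observing that each summand of $W_{d}$ in \eqref{eq:W} has the form $P'SP$ with $S\succeq 0$, the paper phrasing this per-term (and noting that $(\mathcal{H}_{u}^{F})'\mathcal{R}\mathcal{H}_{u}^{F}$ is merely semi-definite because $\mathcal{H}_{u}^{F}$ is a full-row-rank $mL\times n_{\alpha}$ matrix with $mL<n_{\alpha}$), while you verify the quadratic form $v'W_{d}v=\|\mathcal{H}_{u}^{F}v\|_{\mathcal{R}}^{2}+\|\mathcal{H}_{y}^{F}v\|_{\mathcal{Q}}^{2}\geq 0$ directly. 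Where you genuinely go beyond the paper is the kernel analysis, and this is a worthwhile addition rather than a digression: the paper's proof only establishes that the \emph{first} summand is singular, which by itself does not show $W_{d}$ is singular (when $Q\succ 0$ the second summand could a priori be positive definite on $\ker\mathcal{H}_{u}^{F}$), yet singularity of $W_{d}$ is precisely what the surrounding text invokes when it asserts the cost is convex \emph{but not} strictly convex. Your argument closes this gap correctly: any $v$ in the kernel of the stacked Hankel matrix annihilates $\mathcal{H}_{u}^{F}$ and $\mathcal{H}_{y}^{F}$ simultaneously, since both are row-subblocks per \eqref{eq:aid_matrices}; the columns of the stacked Hankel are length-$(L+n)$ trajectories of an order-$n$ minimal system, so its rank is at most (and, by Theorem~\ref{thm:1}, exactly) $m(L+n)+n$; and Assumption~\ref{assump:1} (persistency of excitation of order $L+2n$) forces $N\geq (m+1)(L+2n)-1$, whence $n_{\alpha}=N-L-n+1\geq m(L+n)+n(m+1)>m(L+n)+n$, so $\dim\ker W_{d}\geq mn\geq 1$. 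Two small remarks: first, for the mere existence of a nontrivial kernel you only need the rank \emph{upper} bound $m(L+n)+n$, which follows from the data being a noiseless trajectory and does not require the fundamental lemma at all; second, and for the same reason, your singularity argument does rely on noiseless output data --- with noisy $y^{d}$ the stacked Hankel can become full row rank and $W_{d}$ nonsingular --- whereas the semi-definiteness claim itself is independent of the data, so it is worth keeping the two conclusions clearly separated as you do.
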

	\begin{proof}
		Since $\mathcal{Q} \succeq 0$, then $(\mathcal{H}_{y}^{F})'\mathcal{Q}\mathcal{H}_{y}^{F}$ is positive semi-definite. As Theorem~\ref{thm:1} requires $N \geq (m+1)(L+n)-1$, then $n_{\alpha} \geq m(L+n)$. Since $\mathcal{H}_{u}^{F} \in \mathbb{R}^{mL \times n_{\alpha}}$ is full row rank by construction and $mL < n_{\alpha}$, then $(\mathcal{H}_{u}^{F})'\mathcal{R}\mathcal{H}_{u}^{F}$ is positive semi-definite, despite $\mathcal{R}\succ 0$. As such, $W_{d}$ is positive semi-definite, thus concluding the proof. 
	\end{proof}

	This structural property of $W_{d}$ makes the cost of the optimization problem in  \eqref{eq:alpha_MPC} convex, \emph{but not} strictly convex, ultimately hampering the possibility to retrieve a \emph{unique} explicit solution of the problem. To overcome this limitation, we \emph{augment} the cost with an $L_{2}$-regularization term, leading to the following regularized data-based predictive control problem:
	\begin{subequations}\label{eq:regularized_MPC}
		\begin{align}
			& \min_{\alpha}~~||\alpha||_{W_{d}}^{2}+2c_{d}'\alpha+\rho_{\alpha}\|\alpha\|^{2}\\ 
			&~~\mbox{s.t. } \begin{bmatrix}
				\mathcal{H}_{u}^{P}\\
				\mathcal{H}_{y}^{P}
			\end{bmatrix}\alpha=\chi_{0},\label{eq:init_constr1}\\
			&~~\quad~~\begin{bmatrix}
				\mathcal{H}_{u}^{T}\\
				\mathcal{H}_{y}^{T}
			\end{bmatrix}\alpha = \chi_{L},\label{eq:term_constr1}\\
			&~~\quad~~\mathcal{H}_{u}^{k}\alpha \in \mathbb{U},~~\mathcal{H}_{y}^{k}\alpha \in \mathbb{Y},~~\forall k \in \mathcal{I}_{L} \label{eq:value_constr1},
		\end{align} 
	\end{subequations}
	where $\rho_{\alpha}>0$ is an hyper-parameter to be tuned.
	\begin{remark}[Small $\rho_{\alpha}$]
		For sufficiently small values $\rho_{\alpha}$, the difference between problem \eqref{eq:nominalMPC} and \eqref{eq:regularized_MPC} can become negligible. In this case, R-EDDPC is likely to inherit the same property of the implicit solution (see \cite[Section III.B]{Berberich2021}).
	\end{remark}	 	
	\subsection{The role of regularization}
	By resulting in the addition of a positive constant to the diagonal of $W_{d}$, the regression term allows us to bypass the structural issue characterizing the cost in~\eqref{eq:cost_fcn}. Therefore, the larger $\rho_{\alpha}$, the more the regularized cost will differ from the one of problem~\eqref{eq:alpha_MPC}. One should thus pick a relatively small regularization parameter for the explicit solution of \eqref{eq:regularized_MPC} to be as close as possible to the one of the implicit MPC problem in \eqref{eq:alpha_MPC}.\\
	Even if the regularized problem~\eqref{eq:alpha_MPC} has been mainly introduced to allow for the computation of the explicit law, we stress that this alternative formulation of the implicit MPC problem goes in the direction of robustification, along the same line followed in \cite{Berberich2021,Coulson2019}. Indeed, $L_2$-regularization has a shrinkage effect on the (implicit) model of $\mathcal{P}$ embedded in $\alpha$. By penalizing the size of its components, the regularization terms steers the elements of $\alpha$ towards zero and, concurrently, towards each others. As such, its use $(i)$ prevents the model from becoming excessively complex, hence hindering overfitting, $(ii)$ it helps in handling noisy data, by implicitly reducing the influence of noise in the prediction accuracy, and $(iii)$ it alleviates problems that can be caused by highly correlated features. Therefore, in selecting $\rho_{\alpha}$ one has to further account for the shrinking effect of this additional penalty.
	
	\subsection{The derivation of R-EDDPC}\label{Sec:derivation}
	To derive the explicit solution of \eqref{eq:regularized_MPC}, let us firstly merge the constraints in \eqref{eq:init_constr1} and \eqref{eq:term_constr1} into a single equality $\mathcal{H}_{d}\alpha=\chi$. Thanks to the polytopic structure of the value constraints in \eqref{eq:value_constr1}, we can rewrite them as $\mathcal{G}_{d}\alpha \leq \beta$, so that the problem in \eqref{eq:regularized_MPC} can be equivalently recast as
	\begin{subequations}\label{eq:regMPC_compact}
		\begin{align}
			& \min_{\alpha}~~\frac{1}{2}||\alpha||_{W_{d}^{\rho}}^{2}+c_{d}'\alpha\\ 
			&~~\mbox{s.t. }~ \mathcal{H}_{d}\alpha =\chi,\\
			&~~\quad~~~\mathcal{G}_{d}\alpha\leq \beta \label{eq:ineq_constr2},
		\end{align}
		where the cost is scaled with respect to the one in \eqref{eq:regularized_MPC} and $W_{d}^{\rho}=W_{d}+\rho_{\alpha} I_{n_{\alpha}}$.
	\end{subequations}
	Moreover, let us introduce the following assumption on the constraints in \eqref{eq:ineq_constr2}.
	\begin{assumption}[Constraints]\label{assump:constraint_shape}
		Given $\mathcal{G}_{d}$ in \eqref{eq:ineq_constr2}, let its rows associated with active constraints be denoted by $\tilde{\mathcal{G}}_{d}$. The rows of 
		$\tilde{G}=\begin{bmatrix}\tilde{\mathcal{G}}_{d}' & \mathcal{H}_{d}'\end{bmatrix}'$ are linearly independent.
	\end{assumption}
	Under this assumption, the closed-form data-driven solution of the implicit problem in \eqref{eq:regMPC_compact} is given by the following theorem.
	\begin{theorem}[R-EDPPC]\label{thm:2}
		Let Assumptions~\ref{assump:1}-\ref{assump:constraint_shape} hold, with the latter satisfied for all possible combinations of active constraints $M \in \mathbb{N}$. Then, the data-driven explicit law coupled with \eqref{eq:regMPC_compact} is unique and is given by
		\begin{equation}\label{eq:dd_law}
			u(\chi)=\begin{cases}
				F_{d,1}\chi+\mathrm{f}_{d,1}~~\mbox{ if }~~E_{d,1}\chi \leq K_{d,1},\\
				\vdots\\
				F_{d,M}\chi+\mathrm{f}_{d,M}~~\mbox{ if }~~E_{d,M}\chi \leq K_{d,M}.			 
			\end{cases}
		\end{equation}
	\end{theorem}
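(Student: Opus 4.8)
The plan is to recognize problem~\eqref{eq:regMPC_compact} as a multiparametric quadratic program (mp-QP) in the decision variable $\alpha$, with the stacked initial/terminal data $\chi$ playing the role of the parameter; the whole argument then follows the classical route of explicit MPC (as in \cite{Alessio2009}), specialised to the data-based matrices introduced above. First I would establish strict convexity: by the preceding Lemma $W_{d}\succeq 0$, and since $\rho_{\alpha}>0$ we have $W_{d}^{\rho}=W_{d}+\rho_{\alpha}I_{n_{\alpha}}\succ 0$, so the cost in~\eqref{eq:regMPC_compact} is strictly convex. Together with the affine equality/inequality constraints this guarantees that, for every feasible $\chi$, the minimizer $\alpha^{\star}(\chi)$ exists and is \emph{unique}. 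This is precisely the point at which regularization is indispensable: without it the Hessian $W_{d}$ is only positive semi-definite, as observed after~\eqref{eq:W}, and $(W_{d}^{\rho})^{-1}$ would fail to exist.

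Next I would write the Karush--Kuhn--Tucker (KKT) conditions and freeze a candidate set of active inequalities. Collecting the active rows of $\mathcal{G}_{d}$ into $\tilde{\mathcal{G}}_{d}$ and merging them with the equality matrix as $\tilde{G}=[\tilde{\mathcal{G}}_{d}'~\mathcal{H}_{d}']'$, the active constraints read $\tilde{G}\alpha=\tilde{b}(\chi)$, where $\tilde{b}(\chi)=[\tilde{\beta}'~\chi']'$ is affine in $\chi$. Stationarity gives $\alpha=-(W_{d}^{\rho})^{-1}(c_{d}+\tilde{G}'\tilde{\lambda})$; substituting into $\tilde{G}\alpha=\tilde{b}(\chi)$ yields a linear system for the multipliers $\tilde{\lambda}$. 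Here Assumption~\ref{assump:constraint_shape} is the workhorse: it makes $\tilde{G}$ full row rank, and combined with $W_{d}^{\rho}\succ 0$ the matrix $\tilde{G}(W_{d}^{\rho})^{-1}\tilde{G}'$ is nonsingular. Hence $\tilde{\lambda}(\chi)$, and therefore $\alpha^{\star}(\chi)$, is an \emph{affine} function of $\chi$. Selecting the applied input from the optimal sequence through the appropriate block of $\mathcal{H}_{u}$ then produces $u=F_{d,i}\chi+\mathrm{f}_{d,i}$ on that active set.

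I would then delimit the set of parameters for which the chosen active set is genuinely optimal. By the KKT conditions this is exactly the set of $\chi$ for which the inactive constraints remain satisfied, $\mathcal{G}_{d}\alpha^{\star}(\chi)\leq\beta$, and the multipliers of the active inequalities stay nonnegative, $\tilde{\mu}(\chi)\geq 0$. Since both $\alpha^{\star}(\chi)$ and $\tilde{\mu}(\chi)$ are affine in $\chi$, each of these is a half-space condition, so the region takes the polyhedral form $\{\chi:E_{d,i}\chi\leq K_{d,i}\}$ appearing in~\eqref{eq:dd_law}.

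Finally, ranging over all $M$ admissible combinations of active constraints --- finitely many, since $\mathcal{G}_{d}$ has finitely many rows --- and invoking Assumption~\ref{assump:constraint_shape} for each, the affine pieces together with their polyhedral critical regions tile the feasible parameter set, yielding the piecewise-affine law~\eqref{eq:dd_law}. Uniqueness of the map follows from the strict convexity of the first step: the minimizer is single-valued for every $\chi$, so distinct candidate regions cannot assign conflicting values and the law is well defined (indeed continuous). I expect the main obstacle to be the invertibility/rank bookkeeping around $\tilde{G}(W_{d}^{\rho})^{-1}\tilde{G}'$: it is exactly the conjunction of the regularization ($W_{d}^{\rho}\succ 0$) and Assumption~\ref{assump:constraint_shape} (full-rank $\tilde{G}$) that simultaneously delivers the affine closed form and the uniqueness, so the care lies in checking that these hypotheses hold for \emph{every} active set, including degenerate boundary configurations.
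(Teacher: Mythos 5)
Your proposal is correct and follows essentially the same route as the paper's own proof: strict convexity from $W_{d}^{\rho}=W_{d}+\rho_{\alpha}I_{n_{\alpha}}\succ 0$, the KKT conditions with a frozen active set, invertibility of $\tilde{G}(W_{d}^{\rho})^{-1}\tilde{G}'$ via Assumption~\ref{assump:constraint_shape}, an affine expression for $\alpha^{\star}(\chi)$ and hence for the input through $\mathcal{H}_{u}^{F}$, polyhedral critical regions from primal and dual feasibility, and enumeration over all $M$ active sets. The only cosmetic difference is notational --- you write the stacked multiplier vector as $\tilde{\lambda}$ where the paper separates it into $\delta=\smallmat{\tilde{\lambda}' & \mu'}'$ and imposes nonnegativity only on the inequality block --- which your later condition $\tilde{\mu}(\chi)\geq 0$ handles correctly.
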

	\begin{proof}
		Since problem~\eqref{eq:regMPC_compact} is strictly convex by construction, it has a unique solution $\alpha$, whose closed-form can be found by applying the Karush-Kuhn-Tucker (KKT) conditions. Therefore, the following holds:
		\begin{subequations}
			\begin{align}
				& W_{d}^{\rho}\alpha+\mathcal{G}_{d}'\lambda+\mathcal{H}_{d}'\mu+c_{d}=0, \label{eq:KKT1}\\
				& \lambda'(\mathcal{G}_{d}\alpha-\beta)=0, \label{eq:KKT2}\\
				& \mathcal{H}_{d}\alpha-\chi=0, \label{eq:KKT3}\\
				& \mathcal{G}_{d}\alpha-\beta \leq 0, \label{eq:KKT4}\\
				& \lambda \geq 0, \label{eq:KKT5} 
			\end{align} 
		\end{subequations}
		where $\lambda \in \mathbb{R}^{(m+p)L}$ and $\mu \in \mathbb{R}^{2n(m+p)}$ are the Lagrange multipliers associated with the inequality and equality constraints in \eqref{eq:regMPC_compact}, respectively.\\
		Consider a generic set of active constraints in \eqref{eq:ineq_constr2}, and let us distinguish the Lagrange multipliers $\tilde{\lambda} \in \mathbb{R}^{n_{\lambda}}$ associated with the latter and the ones coupled with inactive constraints, here indicated as $\hat{\lambda}$. Accordingly, we denote by $\tilde{\mathcal{G}}_{d}$ and $\tilde{\beta}$ the rows of $\mathcal{G}_{d}$ and $\beta$ associated with active constraints, while we indicate with $\hat{\mathcal{G}}_{d}$, $\hat{\beta}$ the remaining rows.\\
		Since both \eqref{eq:KKT2} and the dual feasibility condition in \eqref{eq:KKT5} have to hold for inactive constraints, $\hat{\lambda}$ is a vector of zeros and, thus, the KKT conditions in \eqref{eq:KKT2} and \eqref{eq:KKT4} can be equivalently restated as:
		\begin{subequations}
			\begin{align}
				&\tilde{\mathcal{G}}_{d}\alpha-\tilde{\beta}=0, \label{eq:KKT2_2}\\
				&\hat{\mathcal{G}}_{d}\alpha - \hat{\beta} < 0, \label{eq:KKT4_2}	
			\end{align}
			where the product with $\tilde{\lambda}'$ in \eqref{eq:KKT2_2} is neglected, since this vector is non-zero by definition.
		\end{subequations}
		Thanks to this reformulation, we can then merge \eqref{eq:KKT3} and \eqref{eq:KKT2_2}, so as to obtain a single equality condition, here defined as
		\begin{equation}\label{eq:merged_equality}
			\tilde{G}\alpha-b-\tilde{S}\chi=0,
		\end{equation}  
		where
		\begin{equation*}
			\tilde{G}= \begin{bmatrix}
				\tilde{\mathcal{G}}_{d}\\
				\mathcal{H}_{d}
			\end{bmatrix},~~b=\begin{bmatrix}
				\tilde{\beta} \\
				0
			\end{bmatrix},~~\tilde{S}=\begin{bmatrix}
				0\\
				I_{n(m+p)}
			\end{bmatrix},
		\end{equation*}
		and the dimensions of $\tilde{S}$ depend on the number of active constraints. We can then exploit the stationary condition in \eqref{eq:KKT1} to express $\alpha$ as a function of the non-zero Lagrange multipliers, \emph{i.e.,}
		\begin{equation}\label{eq:alpha_lag}
			\alpha=-(W_{d}^{\rho})^{-1}\left(\tilde{G}'\delta+c_{d}\right),
		\end{equation}
		where $\delta$ is given by $\delta=\smallmat{
			\tilde{\lambda}' &
			\mu'	
		}'$. By merging \eqref{eq:alpha_lag} and \eqref{eq:merged_equality}, we then obtain the equality
		\begin{equation}\label{eq:merged2}
			-\tilde{G}_W\delta-\tilde{b}-\tilde{S}\chi=0
		\end{equation} 
		where $\tilde{G}_{W}=\tilde{G}(W_{d}^{\rho})^{-1}\tilde{G}'$ and $\tilde{b}=\tilde{G}(W_{d}^{\rho})^{-1}c_{d}+b$. We can thus retrieve the closed-form expression  for $\delta$, which is given by
		\begin{equation}\label{eq:delta}
			\delta = -\tilde{G}_{W}^{-1}(\tilde{b}+\tilde{S}\chi).
		\end{equation}
		Since $\tilde{G}_{W}$ can always be inverted according to our assumptions, we can thus substitute \eqref{eq:delta} into \eqref{eq:alpha_lag}, ultimately obtaining the following data-driven expression for $\alpha$:
		\begin{equation}\label{eq:alpha_closedform}
			\alpha=(W_{d}^{\rho})^{-1}\left[\tilde{G}'\tilde{G}_{W}^{-1}(\tilde{b}+\tilde{S}\chi)-c_{d}\right].
			\end{equation}
		Accordingly, we can retrieve the associated data-driven expression for the predicted input sequence for a given set of active constraints as
		\begin{equation}\label{eq:datadriven_input}
			\tilde{u}_{[0,L-1]}(\chi)\!=\!\mathcal{H}_{u}^{F}(W_{d}^{\rho})^{-1}\!\left[\tilde{G}'\tilde{G}_{W}^{-1}(\tilde{b}\!+\!\tilde{S}\chi)\!-\!c_{d}\right]\!\!.
		\end{equation}
		By combining the primal and dual feasibility conditions in \eqref{eq:KKT4} and \eqref{eq:KKT5}, we can further characterize the polyhedral region where \eqref{eq:datadriven_input} holds, which is shaped by the following combination of inequalities:
		\begin{subequations}\label{eq:polyhedral_regions}
			\begin{align}
				& \mathcal{G}_{d}(W_{d}^{\rho})^{-1}\!\!\left[\tilde{G}'\tilde{G}_{W}^{-1}(\tilde{b}\!+\!\tilde{S}\chi)\!-\!c_{d}\right]\!\!-\!\beta \!\leq 0,\\
				& [\tilde{G}_{W}^{-1}(\tilde{b}+S\chi)]_{1:n_{\lambda}}\leq 0,
			\end{align}
		\end{subequations}
		where $n_{\lambda}$ denotes the number of active constraints.\\
		By considering all $M$ possible combinations of active constraints, straightforward manipulations result in an explicit predictive control sequence defined as
		\begin{subequations}\label{eq:predicted_sequence}
			\begin{equation}
				\bar{u}_{[0,L-1]}\!=\!\begin{cases}
					\mathcal{F}_{d,1}\chi+f_{d,1}\mbox{ if }~\mathcal{E}_{d,1}\chi \leq \mathcal{K}_{d,1},\\
					\vdots\\
					\mathcal{F}_{d,M}\chi+f_{d,M}\mbox{ if }~\mathcal{E}_{d,M}\chi \leq \mathcal{K}_{d,M},				 
				\end{cases}
			\end{equation}	
			where
			\begin{align}
				& \mathcal{F}_{d,i} = \mathcal{H}_{u}^{F}(W_{d}^{\rho})^{-1}\tilde{G}_{i}'\tilde{G}_{W,i}^{-1}\tilde{S}_{i},\\
				& f_{d,i} = \mathcal{H}_{u}^{F}(W_{d}^{\rho})^{-1}\left(\tilde{G}_{i}'\tilde{G}_{W,i}^{-1}\tilde{b}_{i}-c_{d,i}\right),\\
				& \mathcal{E}_{d,i} = \begin{bmatrix}
					\mathcal{G}_{d}(W_{d}^{\rho})^{-1}\tilde{G}_{i}'\tilde{G}_{W,i}^{-1}\tilde{S}_{i},\\
					\tilde{G}_{W,i}^{-1}\tilde{S}_{i}
				\end{bmatrix},\\
				& \mathcal{K}_{d,i} = \begin{bmatrix}
					\beta_{i}+\mathcal{G}_{d}(W_{d}^{\rho})^{-1}\left(c_{d,i}-\tilde{G}_{i}'\tilde{G}_{W,i}^{-1}\tilde{b}_{i}\right)\\
					-\tilde{G}_{W,i}^{-1}\tilde{b}_{i}
				\end{bmatrix}, 
			\end{align}
			for all $i=1,\ldots,M$, where $\tilde{G}_{i}$, $\tilde{b}_{i}$, $c_{d,i}$ and $\tilde{S}_{i}$ indicate the rows of $\tilde{G}$, $\tilde{b}$, $c_{d}$ and $\tilde{S}$ associated with the $i$-th set of active constraint, and  $\tilde{G}_{W,i}$ is constructed accordingly.
		\end{subequations}
		Lastly, by selecting the first element of this sequence, it can easily be shown that the control action to be applied is indeed piecewise affine and that it has the structure in \eqref{eq:dd_law}, thus concluding the proof.
	\end{proof}

	Note that, the explicit control law in \eqref{eq:dd_law} has a structure similar to the one obtained in the standard model-based case \cite{Alessio2009,Bemporad2002b}, but the dependence on the matrices of the state-space model of $\mathcal{P}$ have now been replaced with a set of data matrices. Moreover, differently from the explicit predictive law introduced in \cite{Sassella2021}, the obtained piecewise affine law is inherently output-feedback, thus not requiring a direct measurement of the state.
\begin{remark}[Relaxation] Assumption~\ref{assump:constraint_shape} can be relaxed in practice, since redundant constraints can be removed with degeneracy handling strategies, like the ones proposed in \cite{Bemporad2001}.   
		\end{remark}
	\begin{remark}[Coping with general constraints]
	The data-driven law in \eqref{eq:dd_law} can be readily adapted to handle more general polytopic constraints of the form
		\begin{equation}
			\mathcal{G}_{d}\alpha \leq \beta+\varphi \chi.
		\end{equation}
		In this case, the definition of $\tilde{S}$ in \eqref{eq:merged_equality} changes as follows:
		\begin{equation*}
			\tilde{S}=\begin{bmatrix}
				\tilde{\varphi}\\
				I_{n(m+p)}
			\end{bmatrix},
		\end{equation*} 
		with $\tilde{\varphi}$ indicating the rows of $\varphi$ associated to the considered set of active constraints. Nonetheless, the derivations in the proof of Theorem~\ref{thm:2} remain unchanged, and so does the form of the explicit control law.
	\end{remark}	
	\section{A comparison with E-DDPC}\label{Sec:comparison}
	The explicit law derived in Section~\ref{Sec:derivation} is not the first of its kind. Indeed, a data-driven explicit law has already been proposed in \cite{Sassella2021}. Our aim is thus to compare the two DD-PC problems that R-EDDPC and E-DDPC solve.\\
	To this end, let us recall the implicit problem considered in \cite{Sassella2021} by focusing on the case in which the prediction, control and constraint horizons are all equal to $L$, \emph{i.e.,}
	\begin{subequations}\label{eq:EDDPC_prob}
		\begin{align}
			& \min_{\bar{u}_{[0,L-1]}} \sum_{k=0}^{L-1} \left[\|\bar{x}_{k}\|_{\tilde{Q}}^{2}+\|\bar{u}_{k} \|_{\tilde{R}}^{2}\right]+\|\bar{x}_{L}\|_{\tilde{P}}^{2} \label{eq:EDDPC_cost}\\
			&~~\mbox{s.t. }~\bar{x}_{k+1}\!=\!X_{1,N}\!\begin{bmatrix}
				U_{0,1,N}\\
				\hline
				X_{0,N}
			\end{bmatrix}^{\!\dagger} \!\begin{bmatrix}
				\bar{u}_{k}\\
				\bar{x}_{k}
			\end{bmatrix}\!, ~k\!=\!0,\ldots,L\!-\!1, \label{eq:EDDPC_model}\\
			& \quad \quad ~~\bar{x}_0 = x,  \label{eq:EDDPC_init}\\
			& \quad \quad ~~\bar{u}_{k} \in \mathbb{U}, ~~ \bar{x}_{k} \in \mathbb{X},~k=0,\ldots,L-1,  \label{eq:EDDPC_value}
		\end{align}
	\end{subequations}
	where $x \in \mathbb{R}^{n}$ denotes the initial state for the prediction at time $t$ and
	\begin{align}
		&U_{0,1,N}=\begin{bmatrix}
			u_{n}^{d} & u_{n+1}^{d} & \cdots & u_{N+n-1}^{d}
		\end{bmatrix},\\
		&X_{0,N}= \begin{bmatrix}
			x_{n}^{d} & x_{n+1}^{d} & \cdots & x_{N+n-1}^{d}
		\end{bmatrix},\label{eq:X0}\\
		& X_{1,N}=\begin{bmatrix}
			x_{n+1}^{d} & x_{n+2}^{d} & \cdots & x_{N+n}^{d}
		\end{bmatrix},\label{eq:X1}
	\end{align}
	with $x_{k}^{d}$ denoting the state measured or reconstructed from data at instant $k$. Note that, since the input is persistently exciting of order $L+2n$ according to Assumption~\ref{assump:1}, the condition required for the model in \eqref{eq:EDDPC_model} to represent the behavior of the unknown system $\mathcal{P}$ holds when the state is fully measured (see \cite[Theorem 1]{Sassella2021}). \\
	From \eqref{eq:nominalMPC} and \eqref{eq:EDDPC_prob} to be comparable, when the state is not fully measured, we solve problem~\eqref{eq:EDDPC_prob} by considering the non-minimal state realization
		\begin{equation}\label{eq:nonminimal}
			\bar{z}_{k}=\begin{bmatrix}
				u_{k-n}'  & \cdots & u_{k-1}' & y_{k-n}' & \cdots & y_{k-1}
			\end{bmatrix}'.
		\end{equation}
	Note that, for the predictive model in \eqref{eq:EDDPC_model} to be well defined in this scenario, the input has to be persistently exciting of order $2n+1$ \cite[Theorem 7]{DePersis2019}. In our setting, this condition still holds thanks to Assumption~\ref{assump:1}. By relying on the above problem setting, we can prove the existence of the following equivalence relations.
	
	\begin{lemma}[Model equivalence]\label{lemma:1}
		The predictive models in \eqref{eq:model} and \eqref{eq:EDDPC_model} are equivalent.
	\end{lemma}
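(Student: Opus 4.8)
The plan is to establish the equivalence by showing that both predictive models are exact, noiseless representations of one and the same LTI plant $\mathcal{P}$, so that for any common initial condition and any common future input sequence they predict identical output sequences. Concretely, I would read ``equivalence'' as the statement that the set of admissible predicted trajectories $\{\bar{u}_{[0,L-1]},\bar{y}_{[0,L-1]}\}$ compatible with a given initial condition coincides for the two parametrizations, and then argue set equality in both directions simultaneously by routing through the true plant.

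First I would reconcile the two initial conditions. The non-minimal state in \eqref{eq:nonminimal} stacks the last $n$ inputs and outputs, $\bar{z}_k=[u_{k-n}',\ldots,u_{k-1}',y_{k-n}',\ldots,y_{k-1}']'$, which carries exactly the same information as the vector $\chi_0=[\bar{u}_{[-n,-1]}',\bar{y}_{[-n,-1]}']'$ fixed by the initial constraint \eqref{eq:initial_constraint}, up to a fixed reordering (permutation) of its entries. Hence setting $\bar{z}_0$ in \eqref{eq:EDDPC_init} and fixing $\chi_0$ in \eqref{eq:model}--\eqref{eq:initial_constraint} amount to prescribing the same past input/output window.

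Next I would invoke the two underlying representation results to show each model reproduces $\mathcal{P}$ exactly. By Theorem~\ref{thm:1}, under Assumption~\ref{assump:1} the stacked Hankel matrix in \eqref{eq:model} spans precisely the space of length-$(L+n)$ trajectories of $\mathcal{P}$, so $\{\bar{u}_{[-n,L-1]},\bar{y}_{[-n,L-1]}\}$ satisfies \eqref{eq:model} for some $\alpha$ if and only if it is a genuine trajectory of $\mathcal{P}$. Symmetrically, for the non-minimal realization \eqref{eq:nonminimal} the input is persistently exciting of the required order $2n+1$ (again by Assumption~\ref{assump:1}, since $L\geq n$ gives $L+2n\geq 2n+1$), so the data matrix $[U_{0,1,N}'\ X_{0,N}']'$ has full row rank and, by \cite[Theorem~7]{DePersis2019}, the data-based map $X_{1,N}[U_{0,1,N}'\ X_{0,N}']^{\dagger}$ recovers exactly the true one-step transition of $\mathcal{P}$ in these lifted coordinates. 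Iterating \eqref{eq:EDDPC_model} from $\bar{z}_0$ with a prescribed input sequence therefore generates precisely the trajectory of $\mathcal{P}$ emanating from that past window.

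Finally, combining the two exactness statements closes the argument: starting from the same initial input/output window and applying the same future inputs, both \eqref{eq:model} and \eqref{eq:EDDPC_model} return the unique trajectory of $\mathcal{P}$, whence the predicted outputs coincide and the feasible sets of predicted trajectories are identical. The main obstacle I anticipate is the bookkeeping in the penultimate step: one must verify that the chosen $\bar{z}_k$ is a valid (observable and reachable) non-minimal realization whose one-step map the data actually identify, and align the ordering conventions between $\bar{z}_k$ and $\chi_0$ so that the two models share literally the same initial data rather than merely isomorphic ones. Once this coordinate identification is pinned down, the remainder follows directly from the two cited representation theorems without further computation.
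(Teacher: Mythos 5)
Your proof is correct in substance, but it takes a genuinely different route from the paper's. You argue at the level of trajectory sets: \eqref{eq:model} is an exact representation of $\mathcal{P}$ by Theorem~\ref{thm:1} (with the required excitation order $L+2n$ supplied by Assumption~\ref{assump:1}), the pseudoinverse model \eqref{eq:EDDPC_model} is exact by \cite[Theorem 1]{Sassella2021} in the state-measured case and by \cite[Theorem 7]{DePersis2019} in the lifted coordinates \eqref{eq:nonminimal}, and since the plant's continuation is uniquely determined by the length-$n$ past window and the future inputs, the two sets of admissible predictions coincide. The paper instead argues computationally: it propagates \eqref{eq:EDDPC_model} over the horizon to write $\bar{x}_{[1,L]}=\begin{bmatrix}\Gamma_{d} & \Xi_{d}\end{bmatrix}\smallmat{\bar{u}_{[0,L-1]}\\ \bar{x}_{0}}$, uses persistence of excitation to parametrize this through the data, and constructs explicit selection maps $\mathcal{T}$ (resp.\ $\tilde{\mathcal{T}}$ in the non-minimal case) so that \emph{both} models reduce to the same shifted-Hankel identity $\bar{x}_{[1,L]}=H_{L}(x_{d}^{+})\alpha$ (resp.\ $\bar{z}_{[1,L]}=H_{L}(z_{d}^{+})\alpha$). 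Both proofs ultimately rest on the same two pillars (the fundamental lemma and exactness of the data-based one-step map), but your behavioral formulation buys conceptual clarity and avoids the matrix bookkeeping, while the paper's computation buys the explicit identities relating $\alpha$ to the predicted sequences. Two points you flagged as ``bookkeeping'' deserve to be stated rather than deferred: uniqueness of the continuation from an $n$-long past window requires observability of $\mathcal{P}$ (lag at most $n$), which the paper's standing assumptions provide; and in the non-minimal case the relevant data matrix is $\begin{bmatrix}U_{0,1,N}' & Z_{0,N}'\end{bmatrix}'$ built from the lifted states $z_{k}^{d}$, not $X_{0,N}$ as you wrote --- its full row rank, which the existence of the right inverse in \eqref{eq:EDDPC_model} presupposes, is precisely the content of \cite[Theorem 7]{DePersis2019} and does not follow from persistence of excitation alone.
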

	
	\begin{proof}
		See Appendix~\ref{appendix:A}
	\end{proof}	
	
%	As a direct consequence of Lemma~\ref{lemma:1}, we can further show that the constraints in \eqref{eq:EDDPC_init}-\eqref{eq:EDDPC_value} are equivalent to the ones in \eqref{eq:initial_constraint} and in \eqref{eq:feasibility_condition}, as outlined in the following lemma. 
	
	\begin{lemma}[Constraints equivalence]\label{lemma:2}
		The constraints in \eqref{eq:initial_constraint} and \eqref{eq:EDDPC_init} are always equivalent. Instead, the ones in \eqref{eq:feasibility_condition} and \eqref{eq:EDDPC_value} are equivalent if: $(i)$ $\mathbb{X} \equiv \mathbb{Y}$ when the state is fully measured; $(ii)$ $\mathbb{X} \equiv \mathbb{U}^{n} \times \mathbb{Y}^{n}$ otherwise.
	\end{lemma}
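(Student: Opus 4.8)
The plan is to prove the two assertions separately and, within the value-constraint claim, to split into the fully-measured and non-measured scenarios according to which state realization is used in \eqref{eq:EDDPC_prob}.

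For the initial constraints, I would first recall that $\chi_{0}$ stacks the $n$ past input/output pairs $u_{[t-n,t-1]}$, $y_{[t-n,t-1]}$. When the state is not fully measured, the prediction in \eqref{eq:EDDPC_prob} is carried out on the non-minimal realization \eqref{eq:nonminimal}, so that $\bar{z}_{0}=[u_{-n}',\ldots,u_{-1}',y_{-n}',\ldots,y_{-1}']'$ is, up to a reordering of its entries, exactly $\chi_{0}$; hence \eqref{eq:initial_constraint} and \eqref{eq:EDDPC_init} fix the same quantity and are trivially equivalent. When the state is fully measured, I would instead invoke observability (guaranteed by the standing minimality assumption on $\mathcal{P}$): the map sending the minimal state to the length-$n$ past input/output window is injective, so fixing $\bar{x}_{0}=x$ and fixing $\chi_{0}$ carry the same information for the prediction, which, together with the model equivalence of Lemma~\ref{lemma:1}, yields equivalence. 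This establishes the ``always'' qualifier.

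For the value constraints, I would first note that the input restriction $\bar{u}_{k}\in\mathbb{U}$ appears verbatim in both \eqref{eq:feasibility_condition} and \eqref{eq:EDDPC_value}, so only the state-versus-output restriction needs reconciling. In case $(i)$ the measured minimal state plays the role of the output variable, so $\bar{x}_{k}\in\mathbb{X}$ and $\bar{y}_{k}\in\mathbb{Y}$ impose identical restrictions precisely when $\mathbb{X}\equiv\mathbb{Y}$. In case $(ii)$ the state is the stacked window $\bar{z}_{k}=[u_{k-n}',\ldots,u_{k-1}',y_{k-n}',\ldots,y_{k-1}']'$, so choosing $\mathbb{X}=\mathbb{U}^{n}\times\mathbb{Y}^{n}$ turns $\bar{z}_{k}\in\mathbb{X}$ into the component-wise conditions $u_{k-i}\in\mathbb{U}$, $y_{k-i}\in\mathbb{Y}$ for $i=1,\ldots,n$. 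I would then let $k$ range over the horizon and show that the union of these overlapping window constraints, after discarding the entries already pinned by the initial constraint (and therefore automatically feasible), collapses exactly onto the per-step conditions $\bar{u}_{k}\in\mathbb{U}$, $\bar{y}_{k}\in\mathbb{Y}$ of \eqref{eq:feasibility_condition}.

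The main obstacle I anticipate is precisely this last bookkeeping step in case $(ii)$: because each non-minimal state aggregates $n$ consecutive input/output samples, the per-step state constraints overlap heavily, and one must verify both that no predicted sample over the horizon is left unconstrained and that no constraint outside the horizon of \eqref{eq:feasibility_condition} is spuriously added, the boundary indices at the two ends of the window being the delicate point. Handling these edge effects cleanly, using that the past samples are fixed by \eqref{eq:initial_constraint} and that the equilibrium samples pinned by the terminal constraint \eqref{eq:terminal_constraint} are feasible, is what makes the two constraint sets coincide rather than merely overlap.
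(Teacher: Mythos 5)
Your proposal is correct and follows essentially the same route as the paper's proof in Appendix~B: the initial-condition equivalence rests on a state-reconstruction map (the paper's matrix $T$, whose existence is exactly what your observability remark supplies --- though note the well-defined injection runs from the past window $\chi_{0}$ to the state $\bar{x}_{0}$, not from the state to its past window as you phrase it), and the value-constraint equivalence is read off componentwise from the definitions of $\bar{y}_{k}$ and $\bar{z}_{k}$ in \eqref{eq:nonminimal}. Your boundary-index bookkeeping in case $(ii)$ (past samples pinned by \eqref{eq:initial_constraint}, the last output covered via the feasible equilibrium in \eqref{eq:terminal_constraint}) is in fact \emph{more} careful than the paper, which compresses that entire step into the claim that the conditions ``stem straightforwardly'' from the definition of $\bar{z}_{k}$.
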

	
	\begin{proof}
		See Appendix~\ref{appendix:B}.
	\end{proof}
	
	Since \eqref{eq:EDDPC_prob} aims at steering both states and inputs to zero, while \eqref{eq:nominalMPC} depends on the equilibrium $(u^{s},y^{s})$, we make the comparison for  $(u^{s},y^{s})=(0,0)$.	We can now derive the following result on the relationship between \eqref{eq:nominalMPC} and \eqref{eq:EDDPC_prob}.	
	\begin{theorem}[Problem equivalence]\label{thm:4}
		Consider the relaxed data-driven problem
		\begin{subequations}\label{eq:relaxed_MPC}
			\begin{align}
				&\min_{\alpha,\bar{u},\bar{y}}~ \sum_{k=0}^{L-1} \left[\|\bar{y}_{k}\|_{Q}^{2}\!+\!\| \bar{u}_{k}\|_{R}^{2}\right]\!+\!\bigg\| \begin{bmatrix}
					\bar{u}_{[L-n,L-1]}\\
					\bar{y}_{[L-n,L-1]}	
				\end{bmatrix} \bigg\|_{P}^{2}\!\!\!\!\!+\!\rho_{\alpha}\|\alpha\|^{2}\!\!\!\!\!\\
				&~~~\mbox{s.t. } \begin{bmatrix}
					\bar{u}_{[-n,L-1]}\\
					\bar{y}_{[-n,L-1]}
				\end{bmatrix}=\begin{bmatrix}
					H_{L+n}(u^{d})\\
					H_{L+n}(y^{d})
				\end{bmatrix}\alpha,\\
				&~~~\quad~~ \begin{bmatrix}
					\bar{u}_{[-n,-1]}\\
					\bar{y}_{[-n,-1]}	
				\end{bmatrix}=\chi_{0},\\
				&~~~\quad~~~ \bar{u}_{k} \in \mathbb{U}, ~~ \bar{y}_{k} \in \mathbb{Y}~~\forall k \in \mathcal{I}_{L},
			\end{align}
		\end{subequations}
	where the terminal constraint in \eqref{eq:terminal_constraint} has been softened and added to the cost. Then, \eqref{eq:EDDPC_prob} and \eqref{eq:relaxed_MPC} are equivalent in the following cases:
		\begin{itemize}
			\item[$(i)$] the state is fully measurable, $\rho_{\alpha} \to 0$ and the weights in \eqref{eq:EDDPC_cost} are $Q=\tilde{Q}$, $R=\tilde{R}$, $P=T'\tilde{P}T$, where $T \in \mathbb{R}^{n \times (m+p)n}$ satisfies
			\begin{equation}\label{eq:aid_matrix}
				\bar{x}_{L}=T\begin{bmatrix}
					\bar{u}_{[L-n,L-1]}\\
					\bar{x}_{[L-n,L-1]}
				\end{bmatrix},
			\end{equation}
			\item[$(ii)$] the state is not fully measured, $\rho_{\alpha} \to 0$ and the weights in \eqref{eq:EDDPC_cost} are
			\begin{equation}\label{eq:weights}
				\tilde{Q}=V'\begin{bmatrix}
					R & 0\\
					0 &Q
				\end{bmatrix}V,~~\tilde{R}=0,~~\tilde{P}=P+\tilde{Q},
			\end{equation}
			with $V \in \mathbb{R}^{(m+p)\times n(m+p)}$ verifying 
			\begin{equation}
				\begin{bmatrix}
					\bar{u}_{k-1}\\
					\bar{y}_{k-1}
				\end{bmatrix}=V\bar{z}_{k}.
			\end{equation}
		\end{itemize}
	\end{theorem}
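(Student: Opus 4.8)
The plan is to exploit Lemmas~\ref{lemma:1} and~\ref{lemma:2}, which already certify that the predictive model~\eqref{eq:EDDPC_model} coincides with~\eqref{eq:model} and that, under the stated conditions on $\mathbb{X}$, problems~\eqref{eq:EDDPC_prob} and~\eqref{eq:relaxed_MPC} share the same feasible set and the same decision variables. The only gap left to close is therefore at the level of the objective: I would show that, for the prescribed weights and in the limit $\rho_{\alpha}\to 0$, the two costs coincide up to an additive constant that does not depend on the optimization variables. Since a constant shift leaves the minimizer unchanged, this suffices to prove equivalence. Observe first that letting $\rho_{\alpha}\to 0$ annihilates the penalty $\rho_{\alpha}\|\alpha\|^{2}$ in~\eqref{eq:relaxed_MPC}, so in both cases it remains only to reconcile the quadratic stage and terminal terms.

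For case $(i)$, full state measurability entails $\bar{y}_{k}=\bar{x}_{k}$ (consistently with condition $(i)$ of Lemma~\ref{lemma:2}). The assignments $Q=\tilde{Q}$ and $R=\tilde{R}$ then match the stage costs of~\eqref{eq:EDDPC_cost} and~\eqref{eq:relaxed_MPC} term by term. For the terminal term, I would insert the aid matrix of~\eqref{eq:aid_matrix}, rewriting $\|\bar{x}_{L}\|_{\tilde{P}}^{2}$ as a quadratic form in the window $[\bar{u}_{[L-n,L-1]}';\,\bar{x}_{[L-n,L-1]}']'$; replacing the state by the output and setting $P=T'\tilde{P}T$ reproduces exactly the terminal contribution of~\eqref{eq:relaxed_MPC}, so the two objectives are identical and this case follows.

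For case $(ii)$, the non-minimal realization~\eqref{eq:nonminimal} gives $\bar{x}_{k}=\bar{z}_{k}$. The defining identity $[\bar{u}_{k-1}';\,\bar{y}_{k-1}']'=V\bar{z}_{k}$ combined with $\tilde{Q}=V'\mathrm{diag}(R,Q)V$ yields $\|\bar{z}_{k}\|_{\tilde{Q}}^{2}=\|\bar{u}_{k-1}\|_{R}^{2}+\|\bar{y}_{k-1}\|_{Q}^{2}$, whereas $\tilde{R}=0$ removes the input penalty. I would then use $\tilde{P}=P+\tilde{Q}$ to split the terminal term as $\|\bar{z}_{L}\|_{\tilde{P}}^{2}=\|\bar{z}_{L}\|_{P}^{2}+\|\bar{z}_{L}\|_{\tilde{Q}}^{2}$, which lengthens the $\tilde{Q}$-weighted sum to run over $k=0,\ldots,L$. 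Since $\bar{z}_{L}$ is precisely the stacked window $[\bar{u}_{[L-n,L-1]}';\,\bar{y}_{[L-n,L-1]}']'$, the residual $\|\bar{z}_{L}\|_{P}^{2}$ is the terminal cost of~\eqref{eq:relaxed_MPC}. Re-indexing with $j=k-1$ gives $\sum_{k=0}^{L}\|\bar{z}_{k}\|_{\tilde{Q}}^{2}=\sum_{j=-1}^{L-1}\big(\|\bar{u}_{j}\|_{R}^{2}+\|\bar{y}_{j}\|_{Q}^{2}\big)$, i.e.\ the stage cost of~\eqref{eq:relaxed_MPC} plus the single term at $j=-1$, which is entirely determined by the past window stored in $\chi_{0}$ and is thus constant.

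The step I expect to be most delicate is the index bookkeeping of case $(ii)$: checking that $\tilde{P}=P+\tilde{Q}$ telescopes the stage and terminal penalties correctly and, crucially, that the leftover $j=-1$ term is genuinely independent of the decision variables rather than silently reintroducing a dependence through the model constraint. The existence and form of $T$ and $V$ also need justification, but these follow from the structure of the minimal and non-minimal realizations under the standing controllability/observability assumption. Once the objectives are shown to differ by at most such a constant and the regularization has vanished, the coincidence of feasible sets from Lemmas~\ref{lemma:1}--\ref{lemma:2} forces the minimizers to agree, which is the desired equivalence.
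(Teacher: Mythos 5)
Your proposal is correct and follows essentially the same route as the paper's proof in Appendix~\ref{appendix:C}: case $(i)$ via the terminal-cost transformation $P=T'\tilde{P}T$ with $\bar{y}_{k}=\bar{x}_{k}$, and case $(ii)$ via the nonminimal realization \eqref{eq:nonminimal}, the split $\tilde{P}=P+\tilde{Q}$ telescoping the sum to $k=0,\ldots,L$, and the re-indexed leftover terms $\|\bar{u}_{-1}\|_{R}^{2}+\|\bar{y}_{-1}\|_{Q}^{2}$ being discarded as constants fixed by $\chi_{0}$. Your explicit justification that this $j=-1$ term is decision-independent is in fact slightly more careful than the paper's one-line dismissal.
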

	
	\begin{proof}
		See Appendix~\ref{appendix:C}.
	\end{proof}
	
	Since the steps performed to compute R-EDDPC and E-DDPC are fundamentally the same, the equivalence between \eqref{eq:nominalMPC} and \eqref{eq:relaxed_MPC} shows that the two explicit law are likely to match when $\rho_{\alpha}$ vanishes and the terminal constraint is lifted to the cost in \eqref{eq:regularized_MPC} or, alternatively, the terminal cost is replaced with a hard constraint in \eqref{eq:EDDPC_prob}.
	
	\section{Extensions}\label{Sec:extensions}
	We now highlight how the explicit solution derived in Section \ref{Sec:derivation} can be extended to two different scenarios. Firstly, we show how the presented derivation can be adapted to handle tracking tasks, according to the scheme proposed in \cite{Berberich2020b}. Then, along the line followed in \cite{Berberich2021}, we introduce a slack variable to further robustify the approach with respect to noisy data and discuss how this modifies the obtained explicit solution.
	\subsection{Handling changing set points in R-EDDPC}
	When the control objectives shifts from reaching a given equilibrium point $(u^{s},y^{s})$ to tracking an input/output reference behavior ($u^{r},y^{r}$), the implicit MPC problem to be solved at each time step $t \in \mathbb{N}$ can be modified as follows \cite{Berberich2020b}:
	\begin{subequations}\label{eq:tracking_MPC}
		\begin{align}
			&\min_{\substack{\alpha,\bar{u},\bar{y}\\u^{s},y^{s}}}~~\tilde{\ell}(\bar{u}_{k},\bar{y}_{k},u^{s},y^{s})\\
			&~~~\mbox{s.t. } \begin{bmatrix}
				\bar{u}_{[-n,L-1]}\\
				\bar{y}_{[-n,L-1]}
			\end{bmatrix}=\begin{bmatrix}
				H_{L+n}(u^{d})\\
				H_{L+n}(y^{d})
			\end{bmatrix}\alpha,\label{eq:model_tracking}\\
			&~~~\quad~~ \begin{bmatrix}
				\bar{u}_{[-n,-1]}\\
				\bar{y}_{[-n,-1]}	
			\end{bmatrix}=\chi_{0}, \label{eq:initial tracking} \\
			&~~~\quad ~~ \begin{bmatrix}
				\bar{u}_{[L-n,L-1]}\\
				\bar{y}_{[L-n,L-1]}	
			\end{bmatrix}=\Omega\begin{bmatrix}u^{s}\\
				y^{s}\end{bmatrix},		 \label{eq:terminal tracking}\\
			&~~~\quad~~~ \bar{u}_{k} \in \mathbb{U}, ~~ \bar{y}_{k} \in \mathbb{Y}~~\forall k \in \mathcal{I}_{L}, \label{eq:tracking_feasibility}\\
			&~~~\quad ~~ (u^{s},y^{s}) \in \mathbb{U}^{s} \times \mathbb{Y}^{s}, \label{eq:terminal_feasibility}
		\end{align}
	\end{subequations} 
	where the cost is now given by
	\begin{align*}
		\tilde{\ell}(\bar{u}_{k},\bar{y}_{k},u^{s},y^{s})=&\sum_{k=0}^{L-1}\|\bar{u}_{k}-u^{s}\|_{R}^{2}+\|\bar{y}_{k}-y^{s}\|_{Q}^{2}\\
		&~+\|u^{s}\!-{u}^{r}\|_{\Psi}^{2}\!+\!\|y^{s}\!-y^{r}\|_{\Phi}^{s}+\rho_{\alpha}\|\alpha\|^{2}\!,
	\end{align*}
	so as to penalize the deviation of the set point $(u^{s},y^{s})$ from the desired target $(u^{r},y^{r})$, with $\Psi,\Phi \succ 0$. Note that, $\Omega \in \{0,1\}^{n(m+p) \times m+p}$ in \eqref{eq:terminal tracking} is such that:  
	\begin{equation*}
		\Omega \begin{bmatrix}
			u^{s}\\
			y^{s}
		\end{bmatrix} =\chi^{T}.
	\end{equation*}
	Moreover, the constraint in \eqref{eq:terminal_feasibility} entails some value conditions on the equilibrium point, here assumed to be still characterized via a set of polytopic constraints.
	
	Within this scenario, the resulting explicit predictive law can be computed by relying on the following lemma.
	
	\begin{lemma}[Piecewise affine solution]\label{thm:3}
		Let $\bar{\alpha}$ collect the optimization variables of problem~\eqref{eq:tracking_MPC} and $\bar{\chi}$ stack the initial conditions $\chi_{0}$ and the reference to be tracked, \emph{i.e.,}
		\begin{equation}\label{eq:extended_alpha}
			\bar{\alpha}=\begin{bmatrix}
				\alpha\\
				u^{s}\\
				y^{s}
			\end{bmatrix},~~~\bar{\chi}=\begin{bmatrix}
				\chi_{0}\\
				u^{r}\\
				y^{r}
			\end{bmatrix}. 
		\end{equation}
		Then, problem~\eqref{eq:tracking_MPC} is explicitly solved by a piecewise affine law defined as in \eqref{eq:dd_law}, with $\bar{\chi}$ substituting $\chi$. 
	\end{lemma}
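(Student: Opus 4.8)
The plan is to recognize that, once the predicted trajectories are eliminated and the equilibrium pair $(u^{s},y^{s})$ is absorbed into the augmented decision vector $\bar{\alpha}$ of \eqref{eq:extended_alpha}, problem~\eqref{eq:tracking_MPC} becomes a strictly convex parametric quadratic program with exactly the template of \eqref{eq:regMPC_compact}, with $\bar{\alpha}$ playing the role of $\alpha$ and $\bar{\chi}$ that of $\chi$. The piecewise affine law then follows by invoking Theorem~\ref{thm:2}.

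First I would eliminate $\bar{u}$ and $\bar{y}$ through the model constraint~\eqref{eq:model_tracking}, writing the predicted windows as $\bar{u}_{[0,L-1]}=\mathcal{H}_{u}^{F}\alpha$ and $\bar{y}_{[0,L-1]}=\mathcal{H}_{y}^{F}\alpha$ as in Section~\ref{Sec:fromIMPtoEXP}. Introducing the matrices $S_{u}$, $S_{y}$ that stack $L$ copies of $u^{s}$, $y^{s}$ (so that $S_{u}u^{s}=u_{L}^{s}$ and $S_{y}y^{s}=y_{L}^{s}$), the stage cost reads $\|\mathcal{H}_{u}^{F}\alpha-S_{u}u^{s}\|_{\mathcal{R}}^{2}+\|\mathcal{H}_{y}^{F}\alpha-S_{y}y^{s}\|_{\mathcal{Q}}^{2}$, while the tracking penalties add $\|u^{s}-u^{r}\|_{\Psi}^{2}+\|y^{s}-y^{r}\|_{\Phi}^{2}$ and the regularizer adds $\rho_{\alpha}\|\alpha\|^{2}$. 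Collecting these terms expresses the cost as $\tfrac{1}{2}\|\bar{\alpha}\|_{\bar{W}}^{2}+\bar{c}(\bar{\chi})'\bar{\alpha}$ up to a constant, where the $\alpha$-block of $\bar{W}$ is $W_{d}^{\rho}$, augmented by the $(u^{s},y^{s})$ blocks and their couplings, and where the only $\bar{\chi}$-dependent part of the linear term $\bar{c}(\bar{\chi})$ enters affinely through $u^{r}$ and $y^{r}$.

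The crux is to prove that $\bar{W}\succ 0$, i.e.\ strict convexity. I would argue directly: if the quadratic form vanishes on a direction $(\delta_{\alpha},\delta_{u^{s}},\delta_{y^{s}})$, then each nonnegative summand vanishes, so in particular $\rho_{\alpha}\|\delta_{\alpha}\|^{2}=0$, $\|\delta_{u^{s}}\|_{\Psi}^{2}=0$ and $\|\delta_{y^{s}}\|_{\Phi}^{2}=0$; since $\rho_{\alpha}>0$ and $\Psi,\Phi\succ 0$, this forces $\delta_{\alpha}=\delta_{u^{s}}=\delta_{y^{s}}=0$. Hence $\bar{W}\succ 0$ and the augmented problem is strictly convex, unlike the regulation case where $(u^{s},y^{s})$ is fixed and the Hessian $W_{d}$ is merely positive semidefinite. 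For the constraints, the initial condition~\eqref{eq:initial tracking} is a linear equality in $\alpha$ with right-hand side $\chi_{0}$, while the terminal constraint~\eqref{eq:terminal tracking}, rewritten as $[(\mathcal{H}_{u}^{T})'\ (\mathcal{H}_{y}^{T})']'\alpha-\Omega[(u^{s})'\ (y^{s})']'=0$, becomes a homogeneous linear equality in $\bar{\alpha}$; merging the two gives $\bar{\mathcal{H}}\bar{\alpha}=\bar{S}\bar{\chi}$ with right-hand side affine in $\bar{\chi}$ (through $\chi_{0}$ only). The value constraints~\eqref{eq:tracking_feasibility} and the equilibrium constraints~\eqref{eq:terminal_feasibility} are polytopic, hence collapse to $\bar{\mathcal{G}}\bar{\alpha}\le\bar{\beta}$. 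Invoking Assumption~\ref{assump:constraint_shape} for the augmented active-constraint matrix, the KKT construction of Theorem~\ref{thm:2} carries over unchanged: the additional affine dependence of the stationarity condition~\eqref{eq:KKT1} on $\bar{\chi}$ through $\bar{c}(\bar{\chi})$ propagates linearly and leaves the piecewise affine structure intact. Extracting the first input block of the predicted sequence $\mathcal{H}_{u}^{F}\alpha$ then yields a law of the form~\eqref{eq:dd_law} with $\bar{\chi}$ substituting $\chi$.

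I expect the main obstacle to be the strict-convexity step: promoting $(u^{s},y^{s})$ to decision variables turns the semidefinite $W_{d}$ into an augmented Hessian that is not obviously definite, and positivity is recovered only thanks to the sum-of-squares structure of the cost together with $\Psi,\Phi\succ 0$ and $\rho_{\alpha}>0$. A secondary point to check is that the parameter $\bar{\chi}$ now enters the cost (via the reference) and not merely the constraints; this is harmless, since \eqref{eq:KKT1} stays affine in $\bar{\chi}$, so the multiparametric argument of Theorem~\ref{thm:2} still delivers a piecewise affine solution.
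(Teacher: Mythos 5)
Your proposal is correct and follows essentially the same route as the paper: absorb $(u^{s},y^{s})$ into the augmented variable $\bar{\alpha}$, recast \eqref{eq:tracking_MPC} as a parametric QP of the form \eqref{eq:regMPC_compact} with augmented Hessian $\bar{W}_{d}^{\rho}$, parameter-dependent linear term $\bar{c}_{d}\bar{\chi}$ and equality right-hand side $\bar{\mathcal{S}}\bar{\chi}$, then rerun the KKT construction of Theorem~\ref{thm:2} and extract the first input block. The only difference is cosmetic and in your favor: you spell out the sum-of-squares argument showing $\bar{W}\succ 0$ (via $\rho_{\alpha}>0$ and $\Psi,\Phi\succ 0$), a step the paper's proof leaves implicit in the definition of $\bar{W}_{d}^{\rho}$.
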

	
	\begin{proof}
	Based on the definition of $\bar{\alpha}$ in \eqref{eq:extended_alpha} we can rewrite the constraint in \eqref{eq:model_tracking} as
		\begin{equation}
			\begin{bmatrix}
				\bar{u}_{[-n,L-1]}\\
				\bar{y}_{[-n,L-1]}
			\end{bmatrix}\!=\!\begin{bmatrix}
				H_{L+n}(u^{d}) & 0\\
				H_{L+n}(y^{d}) & 0
			\end{bmatrix}\bar{\alpha}.
		\end{equation}
		This equivalent representation allows us to recast the tracking MPC problem \eqref{eq:regMPC_compact} as
		\begin{subequations}
			\begin{align}
				& \min_{\bar{\alpha}}~~\frac{1}{2}||\bar{\alpha}||_{\bar{W}_{d}^{\rho}}^{2}+\bar{\chi}'\bar{c}_{d}\bar{\alpha}\\ 
				&~~\mbox{s.t. }~ \bar{\mathcal{H}}_{d}\bar{\alpha} =\bar{\mathcal{S}}\bar{\chi},\\
				&~~\quad~~~\bar{\mathcal{G}}_{d}\bar{\alpha}\leq \bar{\beta} \label{eq:ineq_constr3},
			\end{align}
		\end{subequations}
		where \eqref{eq:ineq_constr3} is obtained by merging \eqref{eq:tracking_feasibility}-\eqref{eq:terminal_feasibility}. Note that the matrices characterizing this equivalent problem are defined as follows:
		\begin{align*}
			&\bar{W}_{d}^{\rho}=(\bar{\mathcal{H}}_{u}^{F})'\mathcal{R}\bar{\mathcal{H}}_{u}^{F}+(\bar{\mathcal{H}}_{y}^{F})'\mathcal{Q}\bar{\mathcal{H}}_{y}^{F}\\
			&\qquad\quad+(S_{u^{s}})'\Psi S_{u^{s}}+(S_{y^{s}})'\Phi S_{y^{s}}+\rho_{\alpha}I_{},\\
			&\bar{c}_{d}=-\begin{bmatrix}
				0 & 0 & 0\\
				0 & \Psi  & 0\\
				0 & 0 & \Phi 
			\end{bmatrix},\\
			&\bar{\mathcal{H}}_{d}=\begin{bmatrix}
				\mathcal{H}^{P} & ~~0\\
				\mathcal{H}^{T} & -\Omega
			\end{bmatrix},~~\bar{\mathcal{S}}=\begin{bmatrix}
				I & 0 & 0\\
				0 & 0 & 0
			\end{bmatrix},
		\end{align*}
		with \begin{align*}
		&\bar{\mathcal{H}}_{u}^{F}=\begin{bmatrix}
				\mathcal{H}_{u}^{F} & -\mathcal{C}_{L}^{u^s} & 0
			\end{bmatrix}, ~~ \bar{\mathcal{H}}_{y}^{F}=\begin{bmatrix}
				\mathcal{H}_{y}^{F} & 0 & -\mathcal{C}_{L}^{y^s}
			\end{bmatrix},\\
			& \mathcal{H}^{P}=\begin{bmatrix}
				\mathcal{H}_{u}^{P},\\
				\mathcal{H}_{y}^{P}
			\end{bmatrix},~~\mathcal{H}^{T}=\begin{bmatrix}
				\mathcal{H}_{u}^{T}\\
				\mathcal{H}_{y}^{T}
			\end{bmatrix},
		\end{align*}
		and where $S_{\gamma}$ selects any variable (denoted generically with the placeholder $\gamma$) within a given vector, while $\mathcal{C}_{L}^{\gamma}$ allows one to construct $L$ copies of it. This reformulation allows us to follow the same step presented in Section~\ref{Sec:derivation} to derive the explicit law, by replacing $\tilde{S}$ in \eqref{eq:merged_equality} with
		\begin{equation*}
			\tilde{S}=\begin{bmatrix}
				0\\
				\bar{\mathcal{S}}
			\end{bmatrix},
		\end{equation*}
		and $c_d$ in \eqref{eq:alpha_lag}-\eqref{eq:polyhedral_regions} with $\bar{c}_{d}\bar{\chi}$. As a consequence, the predicted input sequence has the same closed-form in \eqref{eq:predicted_sequence}, with
		\begin{align*}
			& \mathcal{F}_{d,i} = \bar{\mathcal{H}}_{u}^{F}(\bar{W}_{d}^{\rho})^{-1}\left(\tilde{G}_{i}'\tilde{G}_{W,i}^{-1}\tilde{S}_{i}-\bar{c}_{d,i}\right),
			\\
			& f_{d,i} = \bar{\mathcal{H}}_{u}^{F}(\bar{W}_{d}^{\rho})^{-1}\tilde{G}_{i}'\tilde{G}_{W,i}^{-1}\tilde{b}_{i},
			\\
			& \mathcal{E}_{d,i} = \begin{bmatrix}
				\bar{\mathcal{G}}_{d}(\bar{W}_{d}^{\rho})^{-1}\left(\tilde{G}_{i}'\tilde{G}_{W,i}^{-1}\tilde{S}_{i}+\bar{c}_{d,i}\right)\\
				\tilde{G}_{W,i}^{-1}\tilde{S}_{i}
			\end{bmatrix},
			\\
			& \mathcal{K}_{d,i} = \begin{bmatrix}
				\bar{\beta}_{i}-\bar{\mathcal{G}}_{d}(\bar{W}_{d}^{\rho})^{-1}\tilde{G}_{i}'\tilde{G}_{W,i}^{-1}\tilde{b}_{i}\\
				-\tilde{G}_{W,i}^{-1}\tilde{b}_{i}
			\end{bmatrix},
		\end{align*}
		where $\tilde{G}$, $\tilde{G}_{W}$ and $\tilde{b}$ can be readily customized to the current problem from the matrices introduced in Section~\ref{Sec:derivation}, while the subscript indicates the $i$-th set of active constraints. We can then retrieve the input by extracting the first component of this sequence only. This result leads to a law that takes the same piecewise affine form as \eqref{eq:dd_law}, thus concluding the proof. 
	\end{proof}
	
	\subsection{Robustification with slack variables}
	Assume that the measured outputs $y^{d}$ used to construct the Hankel matrices in \eqref{eq:model} are corrupted by noise. In this case, a robust DD-PC formulation similar to the one proposed in \cite{Berberich2021} can be recovered by introducing an additive slack variable $\sigma \in \mathbb{R}^{p(L+n)}$ on the predicted output and the corresponding regularization term as follows:  
	\begin{subequations}\label{eq:robustMPC}
		\begin{align}
			&\min_{\alpha,\bar{u},\bar{y}}~~ \sum_{k=0}^{L-1} \ell(\bar{u}_{k},\bar{y}_{k})+\rho_{\alpha}\|\alpha\|^{2}+\rho_{\sigma}\|\sigma\|^{2}\label{eq:cost_fcnR}\\
			&~~~\mbox{s.t. } \begin{bmatrix}
				\bar{u}_{[-n,L-1]}\\
				\bar{y}_{[-n,L-1]}+\sigma
			\end{bmatrix}=\begin{bmatrix}
				H_{L+n}(u^{d})\\
				H_{L+n}(y^{d})
			\end{bmatrix}\alpha,\label{eq:modelR}\\
			&~~~\quad~~ \begin{bmatrix}
				\bar{u}_{[-n,-1]}\\
				\bar{y}_{[-n,-1]}
			\end{bmatrix}=\chi_{0},\label{eq:robust_init}\\
			&~~~\quad ~~ \begin{bmatrix}
				\bar{u}_{[L-n,L-1]}\\
				\bar{y}_{[L-n,L-1]}	
			\end{bmatrix}=\chi_{L}, \\
			&~~~\quad~~~ \bar{u}_{k} \in \mathbb{U}, ~~ \bar{y}_{k} \in \mathbb{Y}~~\forall k \in \mathcal{I}_{L}.
		\end{align}
	\end{subequations}
	Along the line of \cite[Remark 3]{Berberich2021}, we do not include any constraint on the slack variable, by leveraging on the fact that its values can be practically contained by selecting $\rho_{\sigma}>0$ sufficiently large. In turn, this entails that \eqref{eq:robustMPC} can be formulated without any prior information on the measurement noise features, while accounting for the possible mismatch between the outputs predicted from noisy data and the true one.\\  
	To derive the explicit predictive law for this robust formulation, let us introduce the extended optimization variable
	\begin{equation}\label{eq:extended_variable}
		\tilde{\alpha}=\begin{bmatrix}
			\alpha\\
			\sigma
		\end{bmatrix},
	\end{equation}
	and modify the constraint in \eqref{eq:modelR} as
	\begin{equation}
		\begin{bmatrix}
			\bar{u}_{[-n,L-1]}\\
			\bar{y}_{[-n,L-1]}	
		\end{bmatrix}\!\!=\!\!\begin{bmatrix}
			H_{L+n}(u^{d}) & 0\\
			H_{L+n}(y^{d}) & -I_{L+n}
		\end{bmatrix}\!\tilde{\alpha}\!=\!\!\begin{bmatrix}
			\tilde{H}_{L+n}(u^{d})\\
			\tilde{H}_{L+n}(y^{d})
		\end{bmatrix}\tilde{\alpha},
	\end{equation} 
	accordingly. By redefining the matrices in \eqref{eq:aid_matrices} as
	\begin{subequations}\label{eq:aid_matricesR}
		\begin{align}
			& \tilde{\mathcal{H}}_{\gamma}^{P}\!=\![\tilde{H}_{L+n}(\gamma^{d})]_{1:n}, ~~ \tilde{\mathcal{H}}_{\gamma}^{F}\!=\![\tilde{H}_{L+n}(\gamma^{d})]_{n+1:L+n}, \\
			& \tilde{\mathcal{H}}_{\gamma}^{T}\!=\![\tilde{H}_{L+n}(\gamma^{d})]_{L+1:L+n},~ \tilde{\mathcal{H}}_{\gamma}^{k}\!=\![\tilde{H}_{L+n}(\gamma^{d})]_{k+1}, 
		\end{align}
		where $\gamma$ is still a placeholder, we can retrieve the explicit predictive law by following the same steps described in Sections~\ref{Sec:fromIMPtoEXP}-\ref{Sec:derivation}.  
	\end{subequations}
	
%	\begin{remark}[Use bounds to select penalties]
%		If a bound on the measurement noise $\bar{\varepsilon}$ is known, it can be exploited to scale the weight $\rho_{\alpha}$, as in \cite{Berberich2021}. In this case, we can redefine the latter as $\rho_{\alpha}=\varepsilon\tilde{\rho}_{\alpha}$ and tune $\tilde{\rho}_{\alpha}$ instead.	
%	\end{remark}

	\section{Benchmark numerical examples}\label{Sec:examples}
	In this section, we analyze the performance of R-EDDPC on two benchmark examples. Initially, we consider the problem introduced in \cite[Section 7.1]{Bemporad2002b}, with the plant $\mathcal{P}$ modified so as to force the whole state to be measurable. This choice allows us to compare the performance of R-EDDPC with the one of E-DDPC. We then consider the example introduced in \cite{Berberich2021}. In this case, we juxtapose the results attained by R-EDDPC with the ones achieved designing the explicit MPC law using the standard two-stage procedure, \emph{i.e.,} by identifying a state-space model of the system first. We stress that, independently of the considered example, R-EDDPC is always designed by relying on data only. All computations have been carried out on an Intel Core i7-7700HQ processor, running MATLAB 2019b.
	\subsection{SISO system with fully measurable state}
	\begin{figure}[!tb]
		\begin{center}
			\begin{tabular}{c}
				\subfigure[First state component]{\includegraphics[scale=.65,trim=1.5cm 0.5cm 9.5cm 22.5cm,clip]{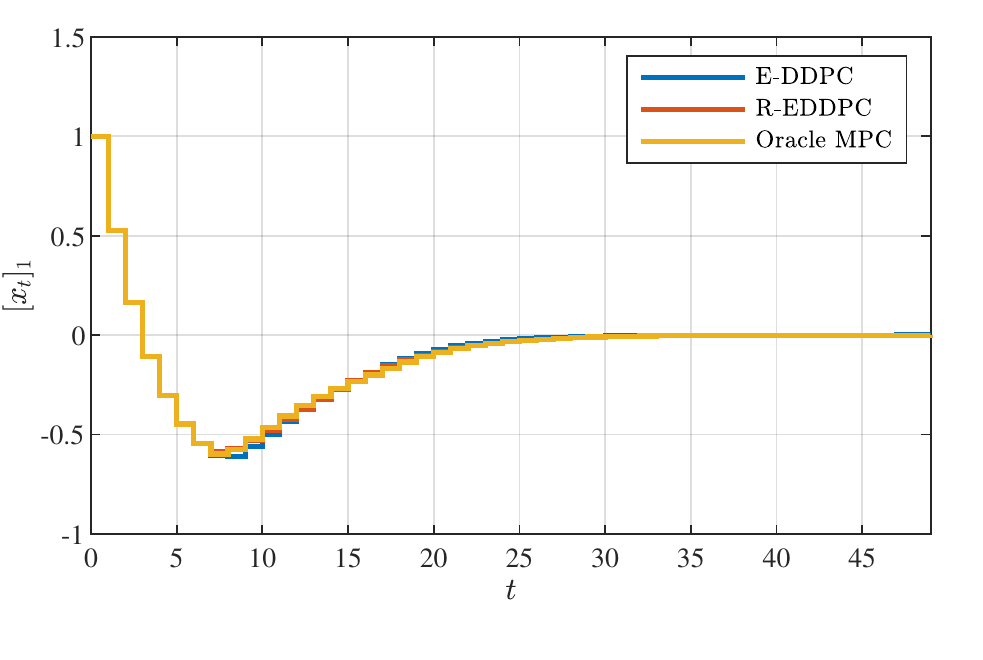}}\\
				\subfigure[Second state component]{\includegraphics[scale=.65,trim=1.5cm .5cm 9.5cm 22.5cm,clip]{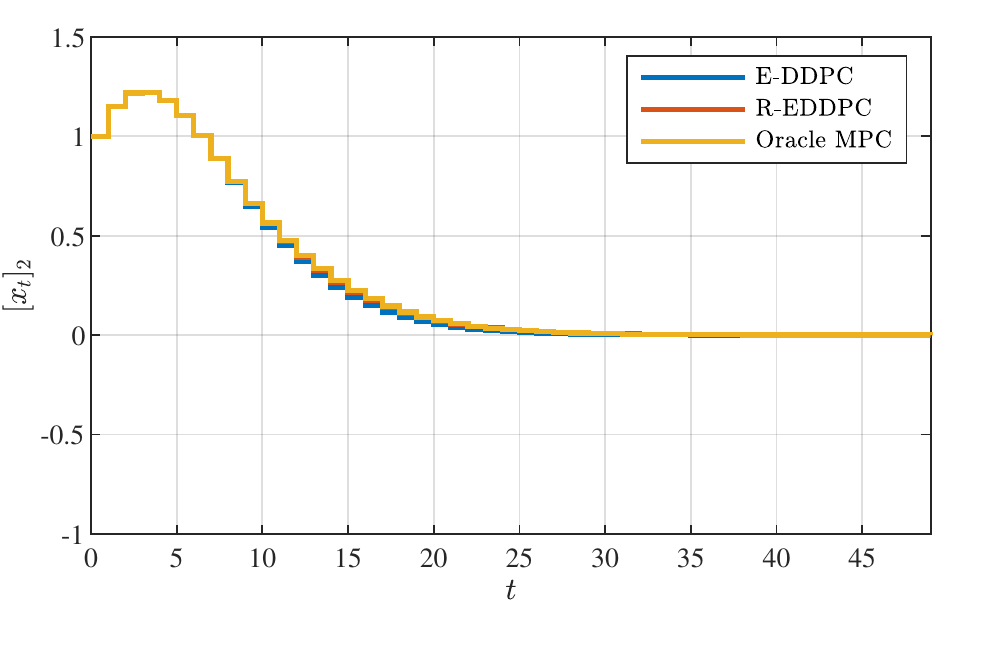}}
			\end{tabular}
		\end{center}
		\caption{SISO system with measurable state: comparison between the performance of the \emph{oracle} explicit MPC, E-DDPC and R-EDDPC over a noiseless test. The \emph{oracle} is obtained by using the true model of the system.}\label{Fig:ex1}
	\end{figure}
	Consider the system introduced in \cite[Section 7.1]{Bemporad2002b}, which is characterized by the following state dynamics:
	\begin{equation}\label{eq:model_ex1}
		x_{k+1}=\begin{bmatrix}
			0.7326 & -0.0861\\
			0.1722 & 0.9909
		\end{bmatrix}x_{k}+\begin{bmatrix}
			0.0609\\
			0.0064
		\end{bmatrix}u_{k},
		\end{equation}
	and assume that its states are measurable. To construct the matrices characterizing R-EDDPC, we have fed the plant with a random input sequence of length $N=100$, that is uniformly distributed in $[-5,5]$, so as to consider an experimental framework similar to the one introduced in \cite{Sassella2021}. The collected data are here corrupted by additive noise, \emph{i.e.,} 
	\begin{equation*}
		y_{k}^{d}=x_{k}^{d}+v_{k},
	\end{equation*}
	where $v \sim \mathcal{N}(0,\Upsilon)$, with $\Upsilon$ being a diagonal matrix chosen so as to yield an average Signal-to-Noise Ratio $\overline{\mbox{SNR}}$ over the two output channels equal to $20$~[dB].\\ 
	To compare R-EDDPC with E-DDPC, instead of solving \eqref{eq:regularized_MPC}, we explicitly solve a regularized version of the data-driven problem shown in \eqref{eq:relaxed_MPC} with weights chosen according to Theorem~\ref{thm:4}, namely
	\begin{subequations}\label{eq:regrelaxed_MPC}
		\begin{align}
			&\min_{\alpha,\bar{u},\bar{x}}~ \sum_{k=0}^{L-1} \left[\|\bar{x}_{k}\|_{\tilde{Q}}^{2}\!+\!\| \bar{u}_{k}\|_{\tilde{R}}^{2}\right]\!+\|\bar{x}_{L}\|_{\tilde{P}}^{2}+\rho_{\alpha}\|\alpha\|^{2}\label{eq:example_cost}\\
			&~~~\mbox{s.t. } \begin{bmatrix}
				\bar{u}_{[-n,L-1]}\\
				\bar{x}_{[-n,L-1]}
			\end{bmatrix}=\begin{bmatrix}
				H_{L+n}(u^{d})\\
				H_{L+n}(x^{d})
			\end{bmatrix}\alpha,\\
			&~~~\quad~~ \begin{bmatrix}
				\bar{u}_{[-n,-1]}\\
				\bar{x}_{[-n,-1]}	
			\end{bmatrix}=\chi_{0},\\
			&~~~\quad~~~ -2\leq \bar{u}_{k} \leq 2~~\forall k \in \mathcal{I}_{L},
		\end{align}
	\end{subequations} 
	with $L=2$, $\tilde{Q}=I_{2}$, $\tilde{R}=0.01$ and $\tilde{P}$ found by solving a data-driven Lyapunov function, as explained in \cite{DePersis2019}, and the input feasibility constraint corresponding to the one considered in \cite{Sassella2021}. We stress that the explicit law for this alternative MPC formulation can still be retrieved as in Section~\ref{Sec:derivation}, by properly augmenting $W_{d}^{\rho}$ and reshaping $\chi$ in \eqref{eq:regMPC_compact}. The regularization parameter $\rho_{\alpha}$ is instead tuned using cross-validation, i.e., by selecting the one minimizing the cost in \eqref{eq:example_cost} within a set of candidate values. Notice that such a procedure requires a closed-loop experiment for each value of $\rho_{\alpha}$ to be assessed. 
%	Not to bond the selected $\rho_{\alpha}$ to the data ultimately used to retrieve the actual explicit controller,
In this work, cross-validation is performed by considering a set of noisy state/input samples of length $N_{cv}=100$ gathered by feeding the plant with a new random input sequence uniformly distributed in $[-5,5]$. 
	This procedure leads to the choice of $\rho_{\alpha}=5$.\\ 
	Let us denote by \emph{oracle} explicit controller the law obtained by using the actual model of $\mathcal{P}$. \figurename{~\ref{Fig:ex1}} reports the comparison between the responses attained with R-EDDPC, E-DDPC and the \emph{oracle} explicit controller over a noiseless closed-loop test. Clearly, the difference between the three outcomes is generally negligible, with a slight discrepancy in the transient response that can be due to the different strategies employed to handle noise in R-EDDPC and E-DDPC. This results is in line with the expectations of Section~\ref{Sec:comparison}.\\   
	\begin{table}[!tb]
		\caption{SISO system with fully measurable state: $\overline{\mbox{SNR}}$ \emph{vs} $\rho_{\alpha}$ and $\mbox{RMSE}_{\mathcal{O}}$ (mean $\pm$ standard deviation) over $30$ Monte Carlo simulations for each noise level.}\label{tab:ex1}
		\begin{center}
			\begin{tabular}{ccc}
				$\overline{\mbox{SNR}}$~[dB] &$\rho_{\alpha}$ (mean$\pm$std) & RMSE$_{\mathcal{O}}$ (mean$\pm$std)\\
				\hline
				\hline
				40 & 0.7$\pm$0.2 & ~~~$(0.3\pm0.2)\cdot 10^{-2}$\\
				\hline
				30 & 2.0$\pm$0.6 & ~~~$(1.0\pm0.5)\cdot 10^{-2}$\\
				\hline
				20 & 6.2$\pm$2.2 &  ~~~$(2.7\pm2.0)\cdot 10^{-2}$\\
				\hline
				10 & 16.0$\pm$10.2 & ~~~$(9.8\pm5.8)\cdot 10^{-2}$\\
				\hline
			\end{tabular}
		\end{center}
	\end{table}
	We then assess the robustness of R-EDDPC to noisy data by considering $30$ different realizations of the datasets used in cross-validation and other $30$ realizations to construct the explicit law for increasing levels of noise, for a total of $60$ dataset for each noise level. We stress that a new hyper-parameter $\rho_{\alpha}$ is tuned for each of the $30$ training sets. To assess the performance of the retrieved R-EDDPC, we consider the following indicator
		\begin{equation}\label{eq:RMSE}
			\mbox{RMSE}_{\mathcal{O}}=\frac{1}{2}\sum_{i=1}^{2}\sqrt{\frac{1}{50}\sum_{t=0}^{49}([y_{t}]_{i}-[y_{t}^{\star}]_{i})},
		\end{equation}  
	which compares R-EDDPC with the \emph{oracle} explicit controller over the same noiseless test considered in \figurename{~\ref{Fig:ex1}}, with $\{y_{t}^{\star}\}_{t=0}^{N_{v}-1}$ being the output resulting from the use of the oracle controller. \tablename{~\ref{tab:ex1}} shows that both the sampled mean and standard deviation of the indicator in \eqref{eq:RMSE} remain small. Instead, the values of the regularization parameter obtained via cross-validation are modified to cope with the increasing noise level. This highlights that the hyper-parameter $\rho_{\alpha}$ can be actively exploited to improve the performance of the explicit predictive controller against noise. We remark that the indicators obtained when using the robustified approach presented in  Section~\ref{Sec:extensions} are comparable to the one in \tablename{~\ref{tab:ex1}}, when $\rho_{\sigma} \geq 100$.
	\subsection{Linearized four tank system}
	\begin{table}[!tb]
		\caption{Linearized four tank system: DD-PC \cite{Berberich2021} \emph{vs} R-EDDPC. Average time $\bar{\tau}$~[s] and worst case time $\tau_{wc}$~[s] needed to find the optimal control action and memory required for storage.}\label{tab:comparison_ex2}
		\begin{tabular}{cccc}
			& $\bar{\tau}$~[s] & $\tau_{wc}$~[s] & Memory~[kB]\\
			\hline
			DD-PC \cite{Berberich2021} & $1.9 \cdot 10^{-2}$ & $1.2 \cdot 10^{-1}$ & 3302\\
			\hline
			R-EDDPC & $8.5 \cdot 10^{-7}$ & $2.7 \cdot 10^{-5}$ &  2.1\\
			\hline
		\end{tabular}
	\end{table}
	Consider now the following fourth order system:
	\begin{subequations}\label{eq:model_ex2}
		\begin{align}
			& x_{k+1}=\!Ax_{k}+Bu_{k}+w_{k},\\
			&y_{k}=\begin{bmatrix}
				1 & 0 & 0 & 0\\
				0 & 1 & 0 & 0
			\end{bmatrix}x_{k}+v_{k},
		\end{align}
		where
		\begin{equation*}
			A=\begin{bmatrix}
				0.921 & 0 & 0.041 & 0\\
				0 & 0.918 & 0 & 0.033\\
				0 & 0 & 0.924 & 0\\
				0 & 0 & 0 & 0.937
			\end{bmatrix},~~~B=\begin{bmatrix}
				0.017 & 0.001\\
				0.001 & 0.023\\
				0 & 0.061\\
				0.072 & 0
			\end{bmatrix},
		\end{equation*}
		already considered in \cite{Berberich2021}. Differently from \cite{Berberich2021}, the state evolution is conditioned by the process noise $w_{k} \sim \mathcal{N}(0,\Delta)$, where $\Delta$ has been randomly chosen as
		\begin{equation*}
			\Delta=10^{-3} \begin{bmatrix}
				10 ~&~ 1 ~&~ 2 ~&~ 3\\
				1 ~&~ 10.01 ~&~ 2 ~&~ 1.5\\
				2 ~&~ 2 ~&~ 3 ~&~ 4\\
				3 ~&~ 1.5 ~&~ 4 ~&~ 7
			\end{bmatrix},
		\end{equation*}
		(notice that positive definiteness is verified), while the measurement are affected by an additive noise $v_{k} \sim \mathcal{N}(0,\Upsilon)$, with $\Upsilon$ chosen equal to $5.76 \cdot 10^{-4} I_{2}$, for the average output Signal-to-Noise ratio to be comparable to the one in \cite{Berberich2021}.  
	\end{subequations}
	Our goal is to force the inputs and outputs of the system to reach the equilibrium point
	\begin{equation*}
		u^{s} = \begin{bmatrix}
			1\\1\end{bmatrix}\!, ~~~ y^{s}=\begin{bmatrix} 0.65\\0.77\end{bmatrix}\!,
	\end{equation*}
	without requiring them to satisfy any value constraint over the prediction horizon. Since we share the same objective and specifications as \cite{Berberich2021}, we design the explicit predictive controller by considering the robust formulation in \eqref{eq:robustMPC} and by selecting the same parameters considered therein, \emph{i.e.,}
	\begin{equation*}
		L=30,~~Q=3I_{2},~~R=10^{-4}I_{2},~~\rho_{\alpha}=0.1,~~\rho_{\sigma}=10^{3}.
	\end{equation*}
	Analogously, we generate the data to construct the Hankel matrices in \eqref{eq:model} as in \cite{Berberich2021}, by feeding the plant with a random input sequence of length $N=400$, uniformly distributed within $[-1,1]$.\\
	In this setting, we compare the responses attained with the implicit DD-PC in \cite{Berberich2021} (denoted as $y_{t}^{i}$) and R-EDDPC, by considering the following indicator:
	\begin{equation*}
		\mbox{RMSE}_{I\!E}=\frac{1}{2} \sum_{j=1}^{2}\sqrt{\frac{1}{N_{v}}\sum_{t=0}^{N_v-1}([y_{t}]_{j}-[y_{t}^{i}]_{j})}
	\end{equation*}
	which allows us to assess the discrepancy in performance attained with the two controllers over a closed-loop test. Over a noiseless test of length $N_{v}=600$, we obtain $\mbox{RMSE}_{I\!E}=3.4 \cdot 10^{-7}$, which shows that the implicit and explicit law coincide in terms of performance, as expected. Nonetheless, in this case R-EDDPC is far more convenient from a computational perspective and in terms of memory occupation with respect to its implicit counterpart. Indeed, as shown in \tablename{~\ref{tab:comparison_ex2}}, the average and the worst case CPU times required to compute the optimal input by starting from $10^{4}$ randomly chosen values of $\chi_{0}$ in \eqref{eq:robust_init} are approximately $4$ orders of magnitude smaller for R-EDDPC, with the latter occupying only $6$\% of the memory required to store all the matrices needed for the implicit solution of \eqref{eq:robustMPC}.  While the first result is somehow expected, since with R-EDDPC the optimization problem in not solved in real-time, and the optimal input can be computed via simple function evaluations, the result on memory occupation is mainly due to the features of the considered MPC problem. Indeed, due to the lack of value constraints in the considered problem, R-EDDPC is a linear law. Therefore, this advantage in terms of memory requirements might be lost if value constraints are included in the problem. 
%	\subsection{Comparison with an explicit MPC law computed from an identified model}

	\begin{table}[!tb]
		\caption{Linearized four tank system: explicit MPC (E-MPC) \emph{vs} R-EDDPC. Comparison over $30$ Monte Carlo runs with respect to the percentage of unstable instances in closed-loop and the values of $\mathcal{J}$ (mean $\pm$ standard deviation) in \eqref{eq:KPI}.}\label{tab:comparison2_ex2}
		\begin{tabular}{ccc}
			& Unstable runs [\%] & $\mathcal{J}$ (mean$\pm$std)\\
			\hline
			\hline
			E-MPC+N4SID & 83 \%& 81.94$\pm$122.18\\
			\hline
			R-EDDPC & 0 \%& 9.00$\pm$0.04\\
			\hline
		\end{tabular}
	\end{table}
	\begin{figure*}[!tb]
		\begin{center}
			\begin{tabular}{c}
				\subfigure[First output]{\begin{tabular}{cc}
						\includegraphics[scale=.7,trim=1cm 1cm 8cm 22.5cm,clip]{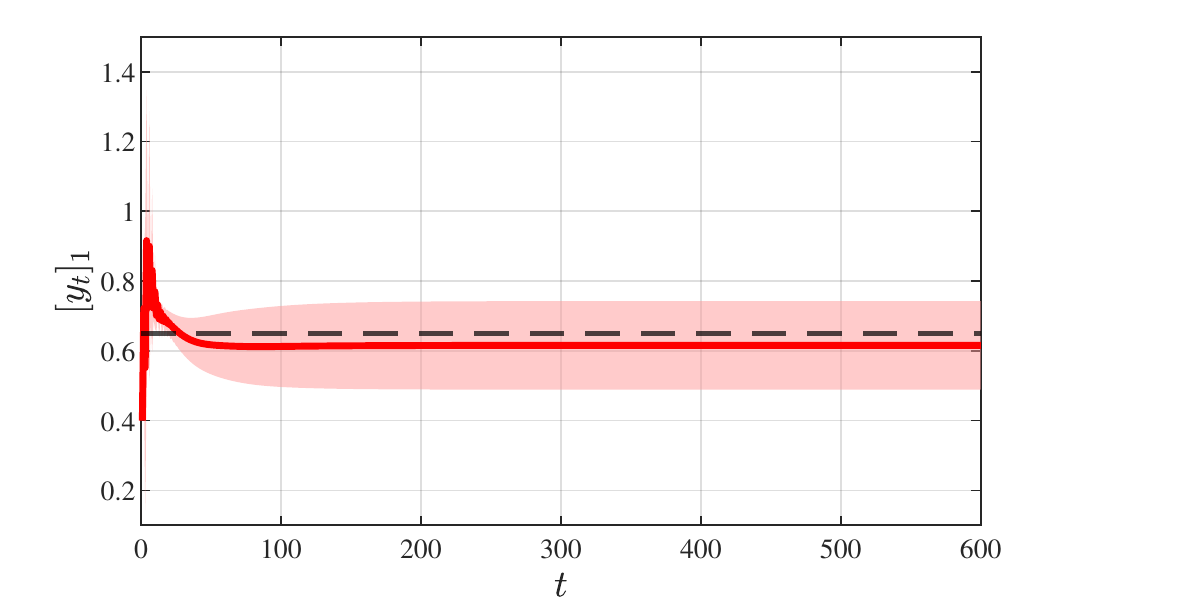} & \includegraphics[scale=.7,trim=1cm 1cm 8cm 22.5cm,clip]{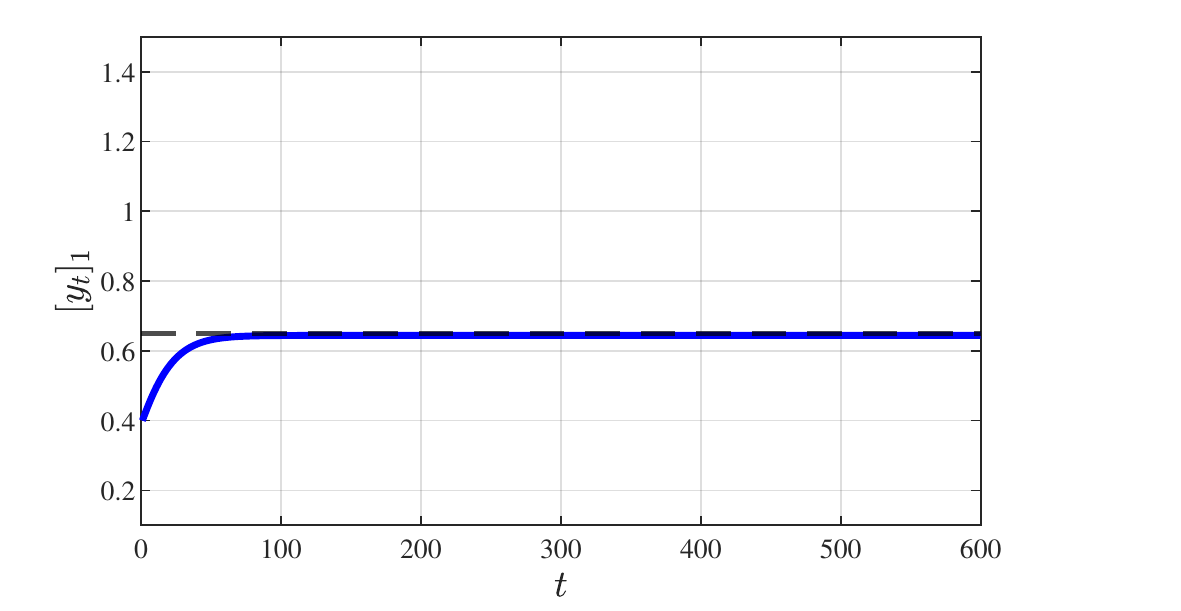}
				\end{tabular}}\\
				\subfigure[Second output]{\begin{tabular}{cc}
						\includegraphics[scale=.7,trim=1cm 1cm 8cm 22.5cm,clip]{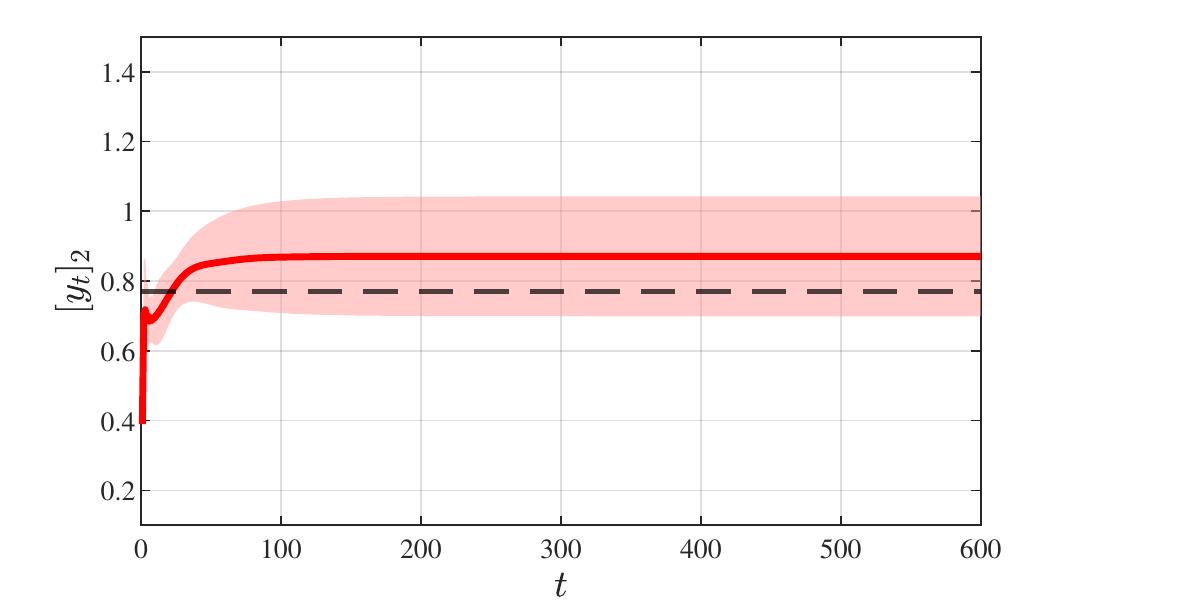} & \includegraphics[scale=.7,trim=1cm 1cm 8cm 22.5cm,clip]{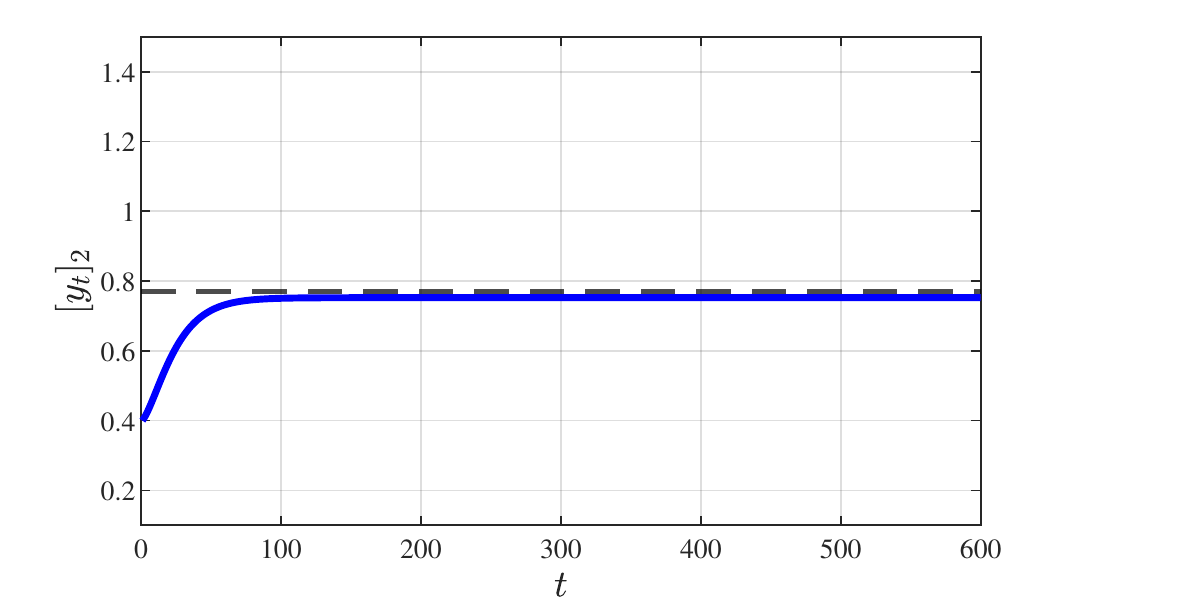}
				\end{tabular}}
			\end{tabular}
		\end{center}
		\caption{Linearized four tank system: E-MPC with identified model (left panels) \emph{vs} R-EDDPC (right panels). Comparison between the mean (line) and standard deviations (shaded areas) of the outputs resulting by the use of the two controllers, limited to stable closed-loop instances only. When looking at the performance of R-EDDPC, the standard deviation is negligible.} \label{Fig:comparison_ex2}
	\end{figure*}
	We now compare the performance attained by R-EDDPC with the ones achieved by designing an explicit predictive controller via the conventional two-phase strategy. By using the same data exploited to construct the Hankel matrices for R-EDDPC, we thus identify a model for the system in \eqref{eq:model_ex2} via the N4SID approach \cite{VanOverschee1994,MatlabSYSID} and, then, we design a model-based predictive controller as in \cite{Bemporad2002b}. This two-stage solution is here denoted as E-MPC. Note that, since the latter needs the state of the system for the computation of the optimal input, E-MPC further requires the design of a Kalman filter \cite{Kalman1960} based on the identified model. In this work, this additional step is carried out under the assumption that an oracle provides us with the true covariance matrices of the process and measurement noise, so as to skip the Kalman filter design phase. We point out that the need of the state estimator already highlights an intrinsic advantage of R-EDDPC, which relies on input/output data only and, thus, solely involve the construction of the Hankel matrices to be designed and deployed.\\
	Initially, we evaluate the robustness of R-EDDPC and E-MPC to noisy data. To this end, we consider $30$ datasets, characterized by different realizations of the process and measurement noise in \eqref{eq:model_ex2}. For each of them, we identify a model of the system and then design the explicit MPC and the R-EDDPC laws. For each of the explicit controllers obtained with E-MPC and R-EDDPC, we then run the same noiseless test and evaluate the attained closed-loop performance through the following index:         
	\begin{equation}\label{eq:KPI}
		\mathcal{J}=\sum_{t=0}^{N_{v}-1} \left[\|y_{t}-y^{s}\|_{Q}^{2}+\|u_{t}-u^{s}\|_{R}^{2} \right],
	\end{equation}    
	where $N_{v}=600$. \tablename{~\ref{tab:comparison2_ex2}} reports its mean value and standard deviation over stable closed-loop instances only. Clearly, R-EEDPC outperforms the standard model-based approach in terms of robustness with respect to different realizations of the training set, generally leading to better performance, as also confirmed by the results reported in \figurename{~\ref{Fig:comparison_ex2}}. This remarkable difference between the two approaches can be linked to the poor quality of the identified model, which indeed results in an average fit of $15$ \% in validation, thus jeopardizing the performance of both E-MPC and the Kalman filter. On the other hand, avoiding a preliminary identification step, adding the regularization term and further robustifying R-EDDPC via the slack variable have proven to be a valid strategy to handle both measurement and process noise. We point out that similar results are also obtained by choosing alternative covariance matrices for the noise in \eqref{eq:model_ex2} at random, while maintaining a similar Signal-to-Noise Ratio on the output.
		\begin{remark}[Case-study dependend conclusions]
		We wish to stress that the tests presented here, but performed with no process noise, have led to much better performance of the identification procedure and, thus, of the model-based controller. It follows that this case study must be considered as a \emph{special} case, nonetheless highlighting that it \emph{might} be worthwhile to map the data directly onto the predictive controller, instead of first identifying the model of the system. Further research is % indeed 
		needed to generalize such a statement.
	\end{remark}
	\section{Conclusions}\label{Sec:conclusions}
	In this paper, we have presented a data-driven regularized explicit predictive controller (R-EDDPC), which can be designed with a batch of properly generated input/output data. The effectiveness of R-EDDPC has been proven on two benchmark simulation examples, showing its correspondence with E-DDPC in \cite{Sassella2021}. The numerical results additionally show that R-EDDPC might outperform an explicit MPC relying on an poorly identified model of the system to be controlled.\\
	Due to the crucial role of the regularization parameter on the performance attained in closed-loop, future works will be devoted to devise hyper-parameter tuning techniques not involving closed-loop experiments. In addition, future research will explore strategies to extend the latter to handle nonlinear systems.  	

	\bibliographystyle{plain}
	\bibliography{R-EDDPC}
	
	\appendix
	\section{Appendix A}
	\subsection{Proof of Lemma~\ref{lemma:1}}\label{appendix:A}	
	\begin{itemize}
		\item[$(i)$] Assume that the state is \emph{fully measurable} and consider the predictive model in \eqref{eq:EDDPC_model}. By propagating it over time, we can characterize the stack of predicted states $\bar{x}_{[1,L]}$ as follows:
		\begin{equation*}
			\bar{x}_{[1,L]}=\begin{bmatrix}
				\Gamma_{d} ~&~ \Xi_{d}
			\end{bmatrix}\begin{bmatrix}
				\bar{u}_{[0,L-1]}\\
				\bar{x}_{0}
			\end{bmatrix},
		\end{equation*}
		where $\Gamma_{d}$ and $\Xi_{d}$ are defined as in \cite[Appendix A]{Sassella2021}. Since $u_{d}$ is persistently exciting of order $L+2n$, it exists $\alpha \in \mathbb{R}^{n_{\alpha}}$ such that 
		\begin{equation*}
			\bar{x}_{[1,L]}=\begin{bmatrix}
				\Gamma_{d} ~&~ \Xi_{d}
			\end{bmatrix}\begin{bmatrix}
				\mathcal{H}_{u}^{F}\\
				\mathcal{H}_{x}^{F}
			\end{bmatrix}\alpha=H_{L}(x_{d}^{+})\alpha,
		\end{equation*} 
		where $x_{d}^{+}=\{x_{k}^{d}\}_{k=n+1}^{N+n}$, $\mathcal{H}_{u}^{F}$ and $\mathcal{H}_{x}^{F}$ are defined as in \eqref{eq:aid_matrices} and the last equality stems from the definitions of $\Gamma_{d}$ and $\Xi_{d}$.
		
		Consider now the model in \eqref{eq:model} and decouple the initial state from the predicted ones, by relying on the decomposition in \eqref{eq:aid_matrices}. Let us introduce the transformation $\mathcal{T} \in \mathbb{R}^{Ln \times L(m+p)}$ such that:
		\begin{equation*}
			\bar{x}_{[1,L]}=\mathcal{T}\begin{bmatrix}
				\bar{u}_{[0,L-1]}\\
				\bar{x}_{[0,L-1]}
			\end{bmatrix}.
		\end{equation*} 
		By premultiplying both sides of \eqref{eq:model} for $\mathcal{T}$, we obtain
		\begin{equation*}
			\bar{x}_{[1,L]}=\mathcal{T}\begin{bmatrix}
				\mathcal{H}_{u}^{F}\\
				\mathcal{H}_{x}^{F}
			\end{bmatrix}\alpha=H_{L}(x_{d}^{+})\alpha,
		\end{equation*}
		where we have exploited the properties of $\mathcal{T}$ and we have replaced the measured output with the state, since they coincide. Therefore,  the predictive models in \eqref{eq:model} and \eqref{eq:EDDPC_model} are equivalent.
		\item[$(ii)$] Assume that the state is \emph{not fully measurable}. Let $\tilde{\mathcal{T}}$ be the transformation that allows to reconstruct the predicted state sequence $\bar{z}_{[1,L]}$, with $\bar{z}_{k}$ defined as in \eqref{eq:nonminimal} for $k=1,\ldots,L$, \emph{i.e.,}
		\begin{equation*}
			\bar{z}_{[1,L]}=\tilde{\mathcal{T}}\begin{bmatrix}
				\bar{u}_{[-n,L-1]}\\
				\bar{y}_{[-n,L-1]}
			\end{bmatrix}.
		\end{equation*} 
		By premultiplying both sides of \eqref{eq:model} for this transformation matrix, we obtain:
		\begin{equation*}
			\bar{z}_{[1,L]}=\tilde{\mathcal{T}}\begin{bmatrix}
				H_{L+n}(u_{d})\\
				H_{L+n}(y_{d})
			\end{bmatrix}\alpha=H_{L}(z_{d}^{+})\alpha,
		\end{equation*}
		where $z_{d}^{+}=\{z_{k}^{d}\}_{k=n+1}^{N+n}$ and 
		\begin{equation*}
			z_{k}^{d}=\begin{bmatrix}
				(u_{k-n}^{d})' & \cdots & (u_{k-1}^{d})' & (y_{k-n}^{d})' & \cdots & (y_{k-1}^{d})' 
			\end{bmatrix}'.
		\end{equation*}
		Let us now recast the model in \eqref{eq:EDDPC_model} according to its input/ouput counterpart (see \cite[Section VI]{DePersis2019}), \emph{i.e.,}
		\begin{equation}
			\bar{z}_{k+1}=Z_{1,N}\begin{bmatrix}
				U_{0,1,N}\\
				Z_{0,N}
			\end{bmatrix}^{\dagger}\begin{bmatrix}
				\bar{u}_{k}\\
				\bar{z}_{k}
			\end{bmatrix},
		\end{equation}
		where
		\begin{align}
			&Z_{0,N}=\begin{bmatrix}
				x_{n}^{d} & x_{n+1}^{d} &\cdots & x_{N+n-1}^{d}
			\end{bmatrix},\label{eq:Z0}\\
			& Z_{1,N}=\begin{bmatrix}
				x_{n+1}^{d} & x_{n+2}^{d} & \cdots & x_{N+n}^{d}
			\end{bmatrix}.\label{eq:Z1}\\
		\end{align}
		By propagating this model over time, we can express the predictive state sequence as
		\begin{equation*}
			\bar{z}_{1,L}=\begin{bmatrix}
				\tilde{\Gamma}_{d} ~&~ \tilde{\Xi}_{d}
			\end{bmatrix}\begin{bmatrix}
				\bar{u}_{[0,L-1]}\\
				\bar{z}_{0}
			\end{bmatrix}=\begin{bmatrix}
				\tilde{\Gamma}_{d} ~&~ \tilde{\Xi}_{d}
			\end{bmatrix}\begin{bmatrix}
				\mathcal{H}_{u}^{F}\\
				\mathcal{H}_{y}^{F}
			\end{bmatrix}\alpha,
		\end{equation*}
		where we have exploited the definitions of $\mathcal{H}_{u}^{F}$ and $\mathcal{H}_{y}^{F}$ in \eqref{eq:aid_matrices} and the fact that the input sequence is persistently exciting of order $L+2n$. Since $\tilde{\Gamma}_{d}$ and $\tilde{\Xi}_{d}$ embed the data-driven counterpart of the (unknown) model of $\mathcal{P}$, it can be straightforwardly proven that the following holds:
		\begin{equation*}
			\bar{z}_{1,L}=\begin{bmatrix}
				\tilde{\Gamma}_{d} ~&~ \tilde{\Xi}_{d}
			\end{bmatrix}\begin{bmatrix}
				\mathcal{H}_{u}^{F}\\
				\mathcal{H}_{y}^{F}
			\end{bmatrix}\alpha=H_{L}(z_{d}^{+})\alpha,
		\end{equation*}
		where $z_{d}^{+}$ is defined as before, thus concluding the proof.
	\end{itemize}
	
	\subsection{Proof of Lemma~\ref{lemma:2}}\label{appendix:B}
	Consider the initial conditions in \eqref{eq:initial_constraint} and \eqref{eq:EDDPC_init}. When the state is fully measured, we can reconstruct the initial state from input/state data as
	\begin{equation*}
			\bar{x}_{0}=T\begin{bmatrix}
				\bar{u}_{[-n,-1]}\\
				\bar{x}_{[-n,-1]}
			\end{bmatrix},
	\end{equation*}
	 where $T \in \mathbb{R}^{n \times (m+p)n}$ is the matrix characterizing this coordinate transformation.	By premultiplying both sides of \eqref{eq:initial_constraint} by $T$, it can be straightforwardly proven that the following holds:
	\begin{equation*}
		\bar{x}_{0}=T\chi_{0}=x,
	\end{equation*}
	which corresponds to the initial condition imposed via \eqref{eq:EDDPC_init}. Instead, when the state is not measured, we can rely on the use of the nonminimal representation \eqref{eq:nonminimal} to reformulate \eqref{eq:EDDPC_init} as:
	\begin{equation*}
		\bar{z}_{0}=\bar{z}.
	\end{equation*}
	Based on the definition of $\bar{z}_{k}$ in \eqref{eq:nonminimal}, it easily follows that \eqref{eq:EDDPC_init} is equal to \eqref{eq:initial_constraint}.\\ 
	As for the conditions that have to be satisfied for the feasibility constraints to hold, they stem straightforwardly from the definition of $\bar{y}_{k}$ when the state if fully measurable, and the one of $\bar{z}_{k}$ when $y_{k} \neq x_{k}$, thus concluding the proof. 	
	\subsection{Proof of Theorem~\ref{thm:4}}\label{appendix:C}
	The equivalence between the constraints of the problems in \eqref{eq:EDDPC_prob} and \eqref{eq:relaxed_MPC} follows from the results in Lemma \ref{lemma:1}-\ref{lemma:2}. The conditions on the weighting matrices can instead be proven as follows.
	\begin{itemize}
		\item[$(i)$] When the state is fully measured, the transformation matrix $T$ allows one to reconstruct the terminal cost characterizing problem~\eqref{eq:EDDPC_cost}. Moreover, since the state is measured, we can replace $\bar{y}_{k}$ with $\bar{x}_{k}$. Therefore, the equivalence of \eqref{eq:relaxed_MPC} and \eqref{eq:EDDPC_prob} can be straightforwardly verified by choosing the weighting matrices $Q$, $R$ and $P$ as indicated in the statement.  
		\item[$(ii)$] When the state is not fully measured, according to the nonminimal representation in \eqref{eq:nonminimal}, the problem in \eqref{eq:EDDPC_cost} has to be modified as follows:
		\begin{subequations}\label{eq:modified_EDDPC}
			\begin{align}
				& \min_{\bar{u}_{[0,L-1]}} \sum_{k=0}^{L-1} \left[\|\bar{z}_{k}\|_{\tilde{Q}}^{2}+\|\bar{u}_{k} \|_{\tilde{R}}^{2}\right]+\|\bar{z}_{L}\|_{\tilde{P}}^{2} \label{eq:EDDPC_modcost}\\
				&~~\mbox{s.t. }~~\bar{z}_{k+1}\!=\!Z_{1,N}\!\begin{bmatrix}
					U_{0,1,N}\\
					\hline
					Z_{0,N}
				\end{bmatrix}^{\!\dagger} \!\begin{bmatrix}
					\bar{u}_{k}\\
					\bar{z}_{k}
				\end{bmatrix}\!, ~~k\!=\!0,\ldots,L\!-\!1,\\
				& \bar{z}_0 = \bar{z},\\
				& \bar{u}_{k} \in \mathbb{U}, ~~ \bar{z}_{k} \in \mathbb{X}, 
			\end{align}
		\end{subequations}
		where $Z_{0,N}$ and $Z_{1,N}$ are defined as in \eqref{eq:Z0} and \eqref{eq:Z1}, respectively. By substituting the matrices in \eqref{eq:weights} into \eqref{eq:modified_EDDPC}, it can be easily seen that the cost $\tilde{J}(\bar{u}_{[0,L-1]})$ in \eqref{eq:EDDPC_modcost} is equivalent to:
		\begin{align*}
			&\tilde{J}(\bar{u}_{[0,L-1]})=\|\bar{u}_{-1}\|_{R}^{2}+\|\bar{y}_{-1}\|_{Q}^{2}\\
			&\qquad +\sum_{k=0}^{L-1} \|\bar{y}_{k}\|_{Q}^{2}+\|\bar{u}_{k}\|_{R}^{2}+\bigg\|\begin{bmatrix}
				\bar{u}_{[L-n,L-1]}\\
				\bar{y}_{[L-n,L-1]}
			\end{bmatrix} \bigg\|_{P}^{2}.
		\end{align*}
		Since in \eqref{eq:modified_EDDPC} we optimize over $\bar{u}_{[0,L-1]}$, the two initial terms in the cost can be neglected, thus concluding the proof.
	\end{itemize}

\end{document}